\newenvironment{subroutine}[1][htb]
{\renewcommand{\ALG@name}{Subroutine}
\begin{algorithm}[#1]%
}{\end{algorithm}}
\newcommand{\sub}[1]{\hyperref[alg:#1]{Subroutine~\ref*{alg:#1}}}
\newcommand{\be}{\begin{equation}}
\newcommand{\ee}{\end{equation}}
\newcommand{\ba}{\begin{array}}
\newcommand{\ea}{\end{array}}
\newcommand{\bea}{\begin{eqnarray}}
\newcommand{\eea}{\end{eqnarray}}
\DeclareMathOperator{\rank}{rank}
\newcommand{\ifrac}[2]{\left\lfloor \frac{#1}{#2} \right\rfloor} 
\newcommand{\calN}{{\cal N }}
\newcommand{\calS}{{\cal S }}
\newcommand{\calT}{{\cal T }}
\newcommand{\zeroc}{\bm{0}}
\newcommand{\decisionTree}{T}
\newcommand{\tree}{\decisionTree_\text{seen}}
\newcommand{\leaves}{\decisionTree_\text{live}}
\renewcommand{\ker}[1]{\mathsf{ker}{(#1)}}
\newcommand{\set}[1]{\{#1\}}
\newcommand{\card}[1]{\lvert#1\rvert}
\newcommand{\Prob}{{\mathbb P}}
\newcommand{\synht}{h}
\newtheorem{dfn}{Definition}
\newtheorem{lemma}{Lemma}
\newcommand{\eq}[1]{Eq.~(\ref{eq:#1})}
\renewcommand{\sec}[1]{\hyperref[sec:#1]{Section~\ref*{sec:#1}}}
\newcommand{\app}[1]{\hyperref[app:#1]{Appendix~\ref*{app:#1}}}
\newcommand{\ssec}[1]{\hyperref[ssec:#1]{Subsection~\ref*{ssec:#1}}}
\newcommand{\fig}[1]{\hyperref[fig:#1]{Figure~\ref*{fig:#1}}}
\newcommand{\tab}[1]{\hyperref[table:#1]{Table~\ref*{table:#1}}}
\newcommand{\lem}[1]{\hyperref[lem:#1]{Lemma~\ref*{lem:#1}}}
\newcommand{\propos}[1]{\hyperref[propos:#1]{Proposition~\ref*{propos:#1}}}
\newcommand{\thm}[1]{\hyperref[thm:#1]{Theorem~\ref*{thm:#1}}}
\newcommand{\alg}[1]{\hyperref[alg:#1]{Algorithm~\ref*{alg:#1}}}
\newcommand{\defn}[1]{\hyperref[defn:#1]{Definition~\ref*{defn:#1}}}
\title{Decision-tree decoders for general quantum LDPC codes}
\author[1]{Kai R. Ott}
\author[2]{Bence Hetényi}
\author[2]{Michael E. Beverland}
\affil[1]{ETH}
\affil[2]{IBM Quantum}
\begin{document}

\maketitle

\begin{abstract}
We introduce Decision Tree Decoders (DTDs), which rely only on the sparsity of the binary check matrix, making them broadly applicable for decoding any quantum low-density parity-check (qLDPC) code and fault-tolerant quantum circuits.
DTDs construct corrections incrementally by adding faults one-by-one, forming a path through a Decision Tree (DT). 
Each DTD algorithm is defined by its strategy for exploring the tree, with well-designed algorithms typically needing to explore only a small portion before finding a correction.
We propose two explicit DTD algorithms that can be applied to any qLDPC code:
(1) A provable decoder: Guaranteed to find a minimum-weight correction. 
While it can be slow in the worst case, numerical results show surprisingly fast median-case runtime, exploring only $w$ DT nodes to find a correction for weight-$w$ errors in notable qLDPC codes, such as bivariate bicycle and color codes. 
This decoder may be useful for ensemble decoding and determining provable code distances, and can be adapted to compute all minimum-weight logical operators of a code.
(2) A heuristic decoder: Achieves higher accuracy and faster performance than BP-OSD on the gross code with circuit noise in realistic parameter regimes.
\end{abstract}

\newpage
\tableofcontents

\setlength{\parskip}{\medskipamount} 

\newpage

\section{Introduction and summary of results}
\label{sec:intro}

Fault-tolerant quantum computing (FTQC) is believed to be essential for realizing large-scale quantum algorithms capable of solving classically intractable problems~\cite{beverland2022assessing,dalzell2023quantum,ho2024quantumcomputingclimateresilience,Ma_2020,Jumper_Evans_Pritzel_Green_Figurnov_Ronneberger_Tunyasuvunakool_Bates_Žídek_Potapenko_etal._2021,klusch2024quantumartificialintelligencebrief}.
FTQC relies on Quantum Error Correction (QEC) codes, which encode computational states and uses carefully designed fault-tolerant (FT) circuits to detect and correct faults during computation.
A critical challenge in this process is to determine appropriate corrections based on measurement outcomes.

This work focuses on \emph{decoders}, classical algorithms that correct faults using measurement outcomes, which must be fast and accurate for practical use.
Decoding is formulated as a linear algebra optimization problem involving two key matrices: the \emph{check matrix} $H \in \mathbb{F}_2^{M \times N}$, which maps $N$ possible faults to $M$ check outcomes, and the \emph{logical action matrix} $A \in \mathbb{F}_2^{K \times N}$, which maps faults to their effects on stored information.
Given a set of unknown faults $F \in \mathbb{F}_2^N$, the decoder uses the observed syndrome $\sigma = HF \in \mathbb{F}_2^M$ to propose a correction $\hat{F}$ such that $\sigma = H\hat{F}$.
The decoder succeeds if $\hat{F}$ satisfies both $H\hat{F} = \sigma$ and $A\hat{F} = AF$.
An important class is \emph{min-weight decoders}, which output a correction $\hat{F}$ with minimum weight, resulting in strong practical performance and provable correction guarantees. 

This framework applies to quantum codes, FT circuits, and classical codes, with instances and noise models set by $H$ and $A$. 
Unlike classical decoding, where the code (and thus $H$) can typically be selected to simplify decoding, fault-tolerant circuit decoding offers limited control over $H$, which depends on hardware and noise but is usually sparse.   
Even when $H$ has exploitable structure, prototyping benefits from testing new quantum codes and fault-tolerant strategies before developing bespoke decoders.
Thus, a compelling goal is to develop a decoder that:
\begin{itemize}[noitemsep] 
\item[(i)] \emph{works as a general qLDPC decoder (i.e., requires only that $H$ is sparse),}  
\item[(ii)] \emph{guarantees a min-weight correction, and}  
\item[(iii)] \emph{runs in poly($w$) worst-case time for weight-$w$ faults, enabling fast decoding in practical settings.}   
\end{itemize}

We expect there is no decoder which meets all three criteria, although to date its existence has not been ruled out\footnote{NP-hardness is known for min-weight decoding of general stabilizer~\cite{hsieh2011} but is not known for the qLDPC sub-family.}.
MaxSAT reductions enable general qLDPC decoders with guaranteed min-weight corrections but remain much slower than other decoders in practical settings, even with optimized SAT solvers~\cite{Berent_2024, noormandipour2024maxsatdecodersarbitrarycss}.
Min-weight perfect matching~\cite{dennis2002topological} and union-find~\cite{delfosse2021almost} decoders satisfy (ii) and (iii), but not (i) because they are limited to check matrices with a maximum column weight of two—applicable to surface codes (and to color codes via reduction~\cite{delfosse2014decoding}) but excluding many other relevant cases.
Notably, union-find has been extended to some other qLDPC codes, but its correction guarantees weaken significantly~\cite{delfosse2022toward}, failing criterion (ii).
Formally efficient, provable decoders for qLDPC codes with finite rate~\cite{panteleev2021quantum,krishna2024viderman} and finite relative distance~\cite{dinur2023good,lin2022good} represent significant theoretical progress.  
However, these decoders require check matrices with strong expansion properties~\cite{leverrier2015quantum,krishna2024viderman}, limiting their applicability regarding criterion (i).  
Moreover, their practical use is hindered by high runtime prefactors and large code sizes~\cite{breuckmann2021ldpc,stambler2023addressing}.

The absence of practically efficient, provable decoders for general qLDPC codes has led to heuristic decoders, such as Belief Propagation with Ordered Statistics Decoding (BP-OSD)~\cite{fossorier2002soft,panteleev2021degenerate,roffe2020} and a range of recent proposals~\cite{higgott2023improveddecodingcircuitnoise,javed2024low,gong2024,iolius2024closed,wolanski2024ambiguity,demarti2024almost,kung2024efficient}.
Heuristic decoders are currently the only viable option for many practical applications, but replacing them with provable decoders that achieve competitive runtimes (satisfying (i), (ii), and (iii)) would be highly desirable.
This work makes progress toward that goal, with key contributions summarized below.

\begin{figure}[ht]
  \centering
    \includegraphics[height=5cm]{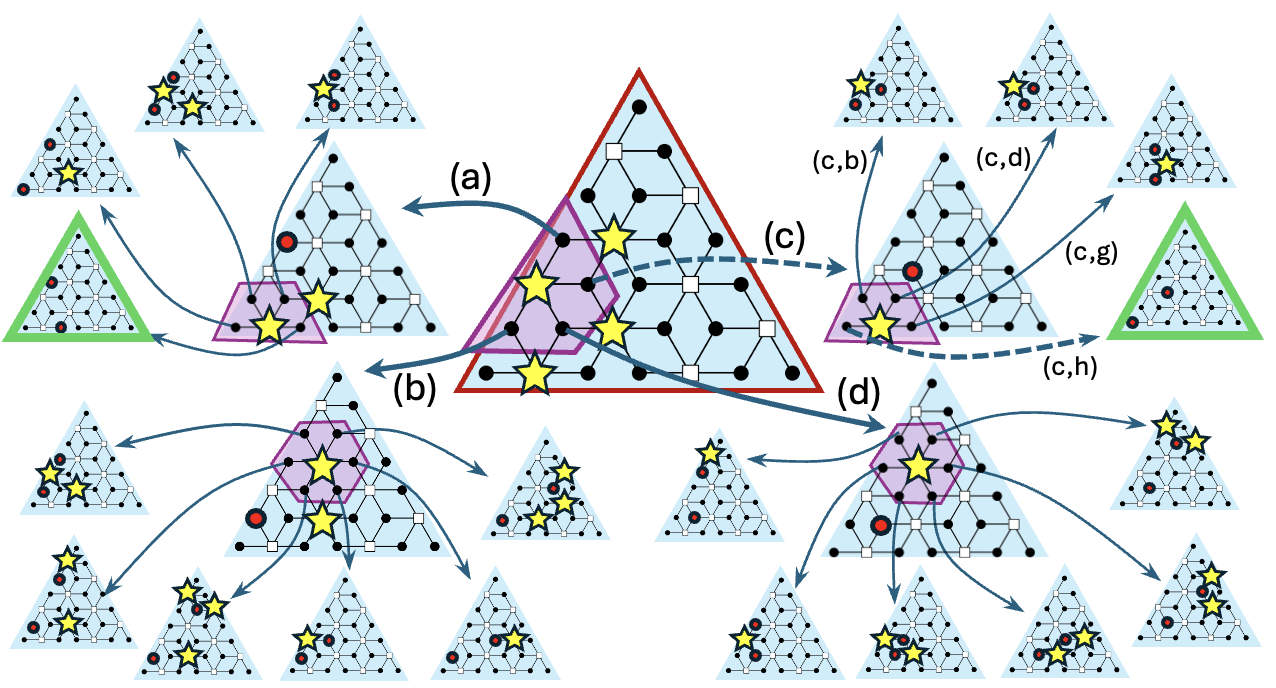}
    \hspace{0.3cm}\includegraphics[height=4.4cm]{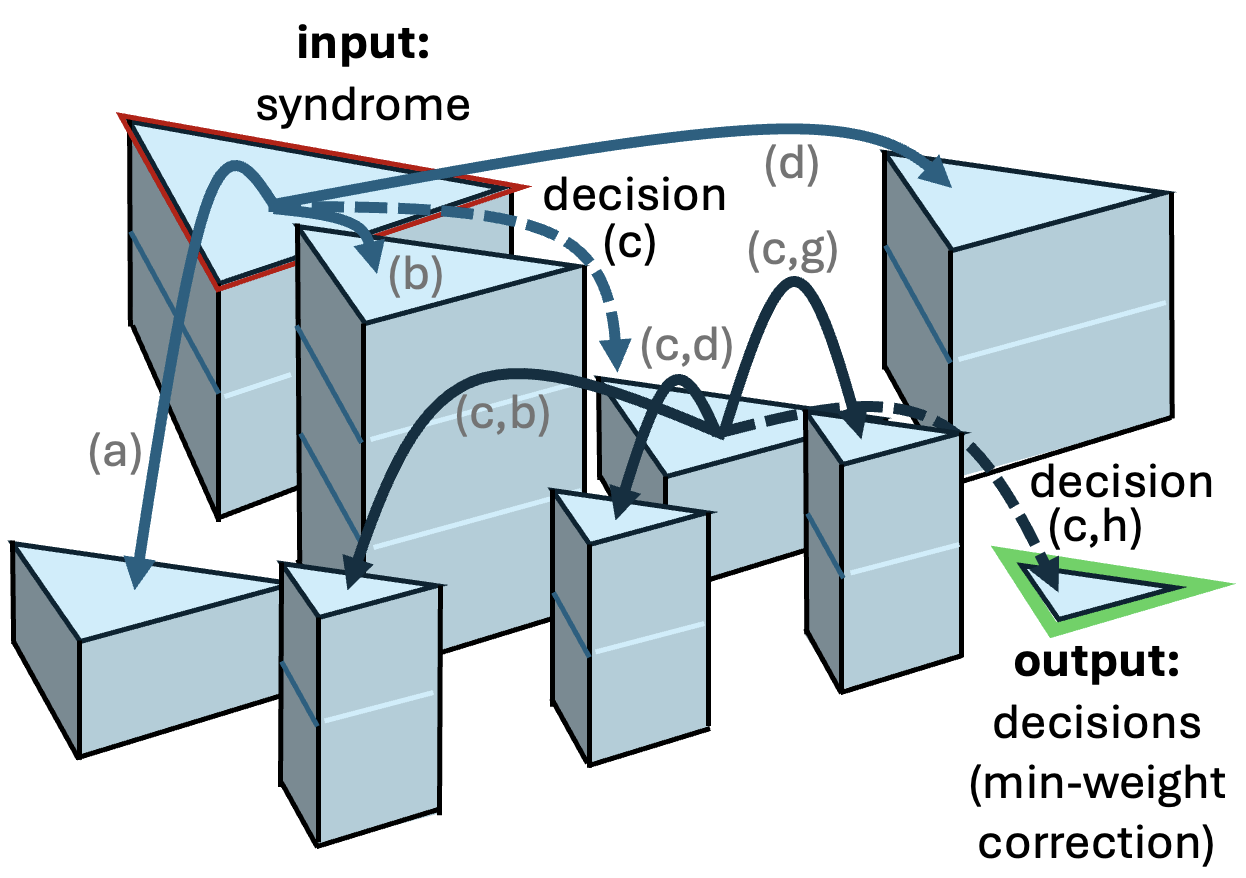}
     \caption{
    \textbf{(Left)} An illustration of the first three levels of the decision tree (DT) for a distance-5 color code.
    Each DT node (blue triangle) represents a partial correction, with the root DT node (red outline) at the center.
    DT nodes are labeled with the code’s Tanner graph, where black circles are data vertices, white squares are check vertices, red circles indicate partial corrections, and yellow stars mark syndrome vertices.
    A syndrome vertex is selected from each DT node, and its neighboring fault vertices are added to form child nodes.
    DT nodes with trivial syndromes (green outline) represent valid corrections.
    \textbf{(Right)} Starting at the root node, DT decoders iteratively explore by assigning a cost to each child of the lowest-cost node in the tree. 
    For example, using the syndrome height $h(\sigma)$, the minimum weight of any error with syndrome $\sigma$, as the cost function yields a minimum-weight correction in $w$ steps for weight-$w$ errors.
     }
  \label{fig:intro-figure}
\end{figure}

\newpage
\textbf{Main results: }
We introduce the general qLDPC Decision Tree Decoder (DTD) family, including:
\vspace{-\parskip}
\begin{enumerate}[noitemsep]
    \item[(1)] A provable decoder that satisfies (i) and (ii). 
    Although it generally falls short of (iii), numerical evidence suggests fast median-case runtime for some qLDPC code families.
    \item[(2)] A heuristic decoder with better accuracy and faster empirical runtime than BP-OSD.
\end{enumerate}

The provable decoder is likely our most significant contribution, as few general qLDPC decoders offer correction guarantees~\cite{Berent_2024, noormandipour2024maxsatdecodersarbitrarycss}.

\textbf{Decision Tree Decoders: }
The core idea of the DTD family is to construct a correction incrementally, adding faults one at a time, akin to the flip algorithm of Sipser and Spielman~\cite{sipser1996expander}. 
This decision sequence traces a path through a \emph{decision tree} (see \fig{intro-figure}), retaining all possible choices, in contrast with list decoding algorithms, where choices are dropped~\cite{dumer2006soft,tal2015list}.
Structural properties of the problem help reduce the search space; for instance, each unsatisfied check vertex neighbors at least one actual fault: it is sufficient to restrict to faults connected to a single check vertex at each search step. 
Nonetheless, the decision tree grows exponentially, and effective algorithms must explore and construct only a small part before identifying a low-weight correction that cancels the syndrome. 
Different DTD algorithms use varying strategies to navigate the decision tree.
DTDs have a loose sparsity requirement for the check matrix $H$ to keep the number of children per node manageable.  

A key observation (see \fig{intro-figure}) is that a DTD decoder satisfying criteria (i), (ii), and (iii) could be constructed with an oracle for the \emph{syndrome height} $h(\sigma)$, defined as the minimum weight of a correction for a given $\sigma$.  
In each round, the decoder selects a fault that reduces the syndrome height by one, which yields a min-weight correction after precisely $h(\sigma)$ iterations.  
While creating a practical algorithm this way seems unlikely, as no efficient method for computing $h(\sigma)$ is known for general qLDPC codes, this idea motivates both our provable and heuristic DTD algorithms.

\textbf{Provable decoder: }    
The Height-bound DTD algorithm performs an assisted breadth-first search, using easily computable lower bounds on $h(\sigma)$ to prune large parts of the decision tree.
We use `neighborhood bounds' applicable to any qLDPC code, which rely on the requirement that each unsatisfied check vertex has at least one neighboring fault.
Height-bound DTD also uses BP for tie-breaking, improving search efficiency while ensuring min-weight corrections.
Empirically, it explores an optimal number of decision tree nodes in the median case across various QEC codes (see \fig{minweight-main-results}), including color codes~\cite{bombin2006topological,Bombin2007} and bivariate bicycle codes~\cite{bravyi2024high}.
This is encouraging, as the decoder performs well despite weaker syndrome-neighborhood bounds for topological codes like color codes, where syndromes appear only at string endpoints.
Even better performance is expected for expander codes, where large errors produce large syndromes.
An intuitive explanation for height-bound DTD's strong median-case performance, even for topological codes, is that while large errors with small syndromes (and thus weak syndrome-neighborhood bounds) occur, they are quite rare.

\begin{figure}[ht]
  \centering
    \includegraphics[width=0.75\textwidth]{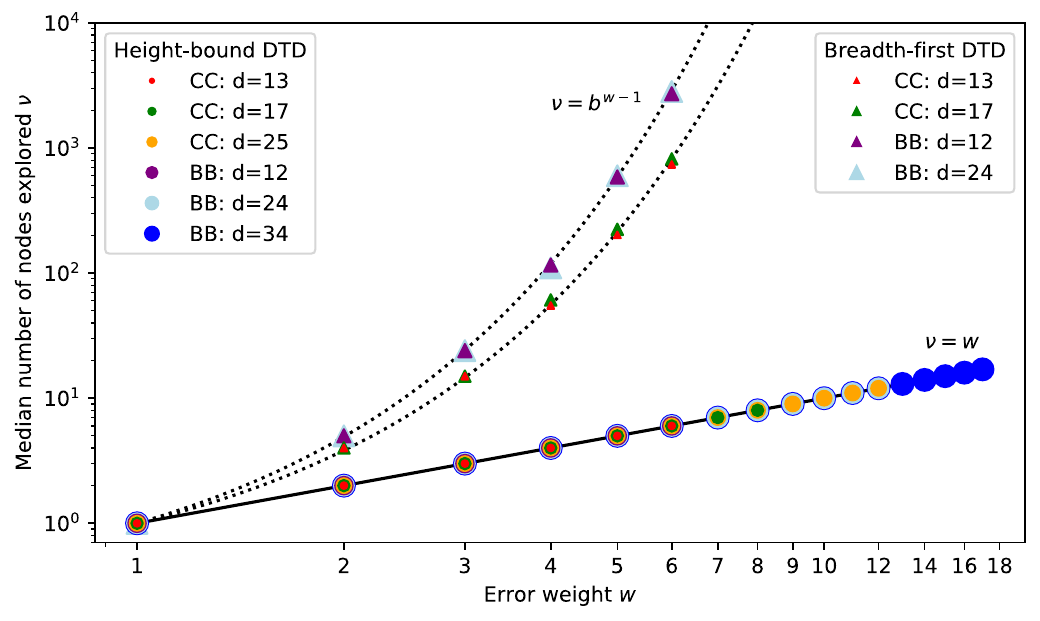}
     \caption{
     \textbf{Provable decoder.}
     For several distance-$d$ color codes (CC) and bivariate bicycle codes (BB), for each $w \leq \frac{d-1}{2}$ we randomly sample all weight-$w$ $X$-type faults.  
    We report the median number of decision tree nodes explored by two decoding algorithms: height-bound DTD and naive breadth-first DTD.  
    As expected, breadth-first DTD explores exponentially many nodes in $w$ (with dashed, fit parameter $b_\text{CC}=3.825$ and $b_\text{BB}=4.902$), while height-bound DTD, perhaps surprisingly, explores only $w$ nodes. 
     }
  \label{fig:minweight-main-results}
\end{figure}

\textbf{Heuristic decoder: }  
Unlike the provable decoder height-bound DTD, which performs a rigorously guided breadth-first search of the DT, our heuristic algorithm BP-DTD uses belief propagation to estimate costs for DT nodes and explores in a depth-first manner.
Another difference is that BP-DTD accommodates non-uniform fault probabilities, making it better-suited for circuit noise than height-bound DTD.
This approach offers faster worst-case performance than height-bound DTD but does not guarantee min-weight corrections.
Our fast, non-provable BP-DTD algorithm aligns with several recently proposed heuristic decoders~\cite{higgott2023improveddecodingcircuitnoise,javed2024low,gong2024,iolius2024closed,wolanski2024ambiguity,demarti2024almost,kung2024efficient,lobl2024breadth,hillmann2024localized,scruby2023local}, and is conceptually most similar to the guided-decimation decoder~\cite{gong2024} and the closed-branch decoder~\cite{iolius2024closed}.   
Although detailed optimization and comparisons with other heuristic decoders are beyond the scope of this work, encouraging initial results in \fig{ler-vs-tcutoff-intro} show significant improvements over BP-OSD for time-sensitive decoding.

\begin{figure}[ht]
  \centering
    \includegraphics[width=0.8\textwidth]{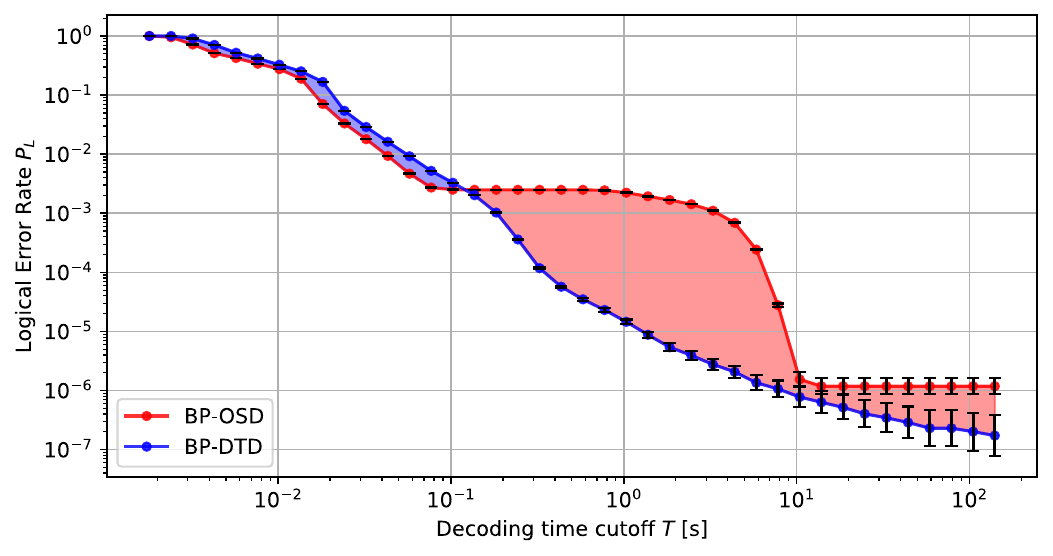}
     \caption{
     \textbf{Heuristic decoder.}
     Cutoff-time performance curves for BP-OSD and BP-DTD for the gross code under circuit noise of strength $p = 10^{-3}$. 
     Failure probability arises from two sources: exceeding the cutoff time $T$ or decoding within $T$ but producing an incorrect correction.
     At early and late times, both decoders perform similarly—either relying on a BP pre-decoder or having enough time to terminate, with BP-DTD showing slightly lower logical error rates in the late regime.  
     In the intermediate regime, BP-DTD terminates more often than BP-OSD, leading to significantly lower logical error rates.
     }
  \label{fig:ler-vs-tcutoff-intro}
\end{figure}

\textbf{Applications and open questions: }
The height-bound decoder provides provable minimum-weight corrections and often runs very quickly, though it can be slow in some cases. 
This makes it appropriate for decoder ensembles~\cite{sheth2020neural,bausch2023learning,shutty2024efficient}, where a fallback decoder could handle cases where it fails to terminate on time. 
Beyond decoding, the decision tree and height bound are valuable tools. 
In \app{find-min-weight-logicals}, we describe an algorithm that leverages these tools to compute all minimum-weight logical operators of a code.
In \app{finding-code-distances} we review how to use height-bound DTD (or any min-weight decoder) as a subroutine in distance-finding algorithms by finding minimum-weight corrections for an augmented check matrix as has been explored with SAT-solvers~\cite{shutty2022decoding}.  
Additionally, BP-DTD appears well-suited for gateware implementations, like FPGAs, due to its parallelizability and the theoretical $O(1)$ runtime per BP iteration on specialized hardware.  
Open questions remain (see \sec{conclusion}): how can these DTD algorithms be improved, what other provable or practical DTD algorithms can be developed, and what are their most compelling applications?

The remainder of this work is organized as follows:  
In \sec{decoding-scenarios}, we review how the decoding matrices $H$ and $A$ arise in various contexts, including classical codes, stabilizer codes, and quantum circuits.  
\sec{decoding} provides key definitions, and \sec{main-algo} introduces the general decision-tree decoder family.  
In \sec{provable-cost} and \sec{fast-cost} we present and analyze our provable and heuristic decoding algorithms height-bound DTD and BP-DTD. 
We conclude in \sec{conclusion}, with a brief discussion of potential future work.  
Additional details and results are provided in the appendices.

\clearpage

\section{A guide to decoding matrices in different settings}
\label{sec:decoding-scenarios}

This section provides a unified language to describe decoding in terms of decoding matrices $H \in \mathbb{F}_2^{M \times N}$ and $A \in \mathbb{F}_2^{K \times N}$ across various contexts, as summarized in \tab{decoding-matrices-summary}. 
Our decision tree decoders are applicable to all of these scenarios, provided that the check matrix $H$ is sparse. 
While this section aims to offer a helpful clarification and review of decoding, it is not essential for understanding the remainder of the paper; key definitions and notation are established in \sec{decoding}.


\begin{table}[h!]
\centering
\begin{tabular}{ |p{3.75cm}||p{2.4cm}|p{2.45cm}|p{5.65cm}|  }
 \hline
\textbf{Decoding problem} & \textbf{Check matrix}  & \textbf{Logical action}  & \textbf{Comments} \\
  & ($M \!\times\! N$) & \textbf{matrix} ($K \!\times\! N$)\!\!  &  \\
 \hline
 \hline
 Classical code $[n,k,d]$ & $ m \times n $    & $ n \times n $ &  $A$ is identity matrix (usually omitted) \\
 \hline
 Stabilizer code $[[n,k,d]]$ & $ m \times 2n $    & $ 2k \times 2n $&  $m \geq n-k$ ($>\!$ if over-complete checks$^*\!$)\!\\
 \hline
   &&& Constraint $(m_X + m_Z) \geq n - k$. \\
  CSS-type & $H_X$:~$m_X \times n $    & $A_X$:~$ k \times n $ & Special case of stabilizer code with:  \\
  stabilizer code $[[n,k,d]]$ & $H_Z$:~$m_Z \times n $    & $A_Z$:~$ k \times n $ & $H = \begin{pmatrix}
H_X & 0 \\
0 & H_Z 
\end{pmatrix}$, $A = \begin{pmatrix}
A_X & 0 \\
0 & A_Z 
\end{pmatrix}$ \\
 \hline
  Logical memory circuit & $m R \times N $    &  $2k \times N $ & $R$-rounds of stabilizer extraction.\\ 
  for stabilizer code $\![[n\!,k\!,d]]\!\!\!\!$ & & &   $N$ depends on circuit and noise. \\ 
 \hline
 Logical operation  &&& Measures $l$ commuting logical Paulis, \\
 circuit for stabilizer  &  $M \times N$  & $(k+k') \times N$  & and applies $k-l$ to $k'$  logical isometry. \\
 codes $\![[n\!,k\!,d]]\!\!\rightarrow \!\![[n'\!\!,k'\!\!,d']]\!\!\!$ & & & $M,N$ depend on circuit and noise. \\
 \hline
\end{tabular}
\caption{
Summary of check and logical action matrices $H \in \mathbb{F}_2^{M \times N}$ and $A \in \mathbb{F}_2^{K \times N}$ across settings.
$^*$The constraint $m \geq n-k$ applies everywhere $m$ appears in the table. 
Classical, stabilizer, and CSS codes assume perfect check measurements.
Logical memory and logical operation circuits account for general circuit noise, including noisy measurements, but assume a noise-free final round of  stabilizer measurements (this assumption can be relaxed). 
For CSS codes, we use the convention $H_X = S_Z$, with $H_X$ detecting $X$ errors and $S_Z$ the $Z$-type stabilizer matrix.
CSS-type restrictions on circuits, not shown here, similarly block-diagonalize the circuit decoding matrices, as they do for CSS codes.
}
\label{table:decoding-matrices-summary}
\end{table}

Given unknown faults $F \in \mathbb{F}_2^N$, the decoder is provided the syndrome $\sigma = HF \in \mathbb{F}_2^M$ and proposes a correction $\hat{F}$, succeeding if $H\hat{F} = \sigma$ and $A\hat{F} = AF$.
Before reviewing decoding in each of the settings in \tab{decoding-matrices-summary}, we comment on some general aspects:

\begin{itemize}

\item It is natural to define two additional matrices—the stabilizer matrix $S$ and the logical operator matrix $L$—given $H$ and $A$. 
While well-known in the context of stabilizer codes\footnote{
For a stabilizer code, $S \in \mathbb{F}_2^{n-k \times 2n}$ and $L \in \mathbb{F}_2^{2k \times 2n}$ correspond to the generators of the stabilizer group $\mathcal{S}$ and the logical operators $\mathcal{L}$, respectively. 
In this case, $H$ and $A$ align with $S$ and $L$ after swapping the first and second sets of $n$ columns. 
This direct correspondence between $H$ and $S$ and between $A$ and $L$ does not hold for circuit decoding.
}, $S$ and $L$ are also meaningful in more general decoding settings.
The stabilizer matrix $S$ consists of rows that form a complete linear basis of vectors $s$ satisfying $Hs = 0$ and $As = 0$. 
This captures the equivalence imposed by $A$: two vectors $f$ and $f'$ are equivalent ($f \sim f'$) if and only if $Hf' = Hf$ and $Af' = f'$, which is equivalently expressed as $f' = f + s$ for $s \in \text{rowspace}(S)$.
The logical operator matrix $L$ consists of rows that form a complete, linearly independent basis of inequivalent vectors $l$ satisfying $Hl = 0$ and $Al \neq 0$ (any such vector $l$ is a linear combination of rows of $L$ and $H$).

\item The decoding problem is invariant under linearly independent row recombinations of each of $H$ and $A$. 
In practice, certain choices can aid decoders (e.g. regular structure, redundancy and low row/column weights).

\item In this section, we do not restrict the probability distribution for $F$: we allow each of the $2^N$ bitstrings to occur with a given probability, capturing arbitrary correlations between the bits of $F$. 
Outside this section (see \sec{decoding}), for simplicity we restrict to uncorrelated distributions, where each bit $F_j$ is independently drawn with probability $p_j$. 
We can however introduce additional bits to $F$ to reproduce the effect of correlations between the original bits. 
For example, in a stabilizer code with noise where $X$, $Y$, and $Z$ errors occur independently with probability $p$, one could use a length-$N=2n$ fault vector (encoding $X$ and $Z$ errors with correlations) or a length-$N=3n$ fault vector (encoding $X$, $Y$, and $Z$ errors independently) to represent the same error model.

\item Throughout this work, we use the term `quantum-LDPC' to refer to any infinite family of decoding problems with (a fixed choice of) check matrices which have bounded row and column weights.

\end{itemize}

\subsection{Classical codes}
\label{sec:classical-codes}

For a classical linear code, the check matrix $H \in \mathbb{F}_2^{M \times N}$ is the main object defining the code.
Codewords are bitstrings $F \in \mathbb{F}_2^N$ such that $HF=0$.
Given a codeword $F$, an additional error $E$ results in $(F+E)$, which has syndrome $\sigma = H(F+E)=HE$.
The logical action matrix $A$ is the identity matrix and does not play a role (therefore $A$ is typically not discussed in classical coding theory). 
This means that in the typical scenario of classical codes, all codewords are considered in-equivalent, and as such the distance $d$ of the code is the Hamming weight of the smallest non-trivial vector in the kernel of $H$.
Therefore the classical code properties $[n,k,d]$ correspond to the $n=N$ bits of the message, which encodes $k= \dim \ker{H} = N-\text{rank}(H)$ logical bits.
Since $M \geq \text{rank}(H)$, we have the constraint that $M \geq n-k$.

\subsection{Stabilizer codes}
\label{sec:stab-codes}

Here we specify how the decoding matrices arise for stabilizer codes.
Let $n$ be the number of qubits that the code is defined on.
A stabilizer code is defined by its \emph{stabilizer group} $\mathcal{S}$, which is an abelian subgroup of the $n$-qubit Pauli group which does not include $-I$.
The code space is the simultaneous +1 eigenspace of every element of $\mathcal{S}$.
We typically fix a set of $m$ generators of the stabilizer group, which can be over-complete, and use this to define the code.
In \fig{example-codes}(a) and (b) we define a family of color codes~\cite{bombin2006topological,Bombin2007} and the gross code~\cite{bravyi2024high} (an example of a bi-variate bicycle code \cite{mackay2004sparse,Kovalev2013a,panteleev2021degenerate,panteleev2021quantum}) which we use as examples.

\begin{figure}[ht]
  \centering
    (a)\includegraphics[height=3.5cm]{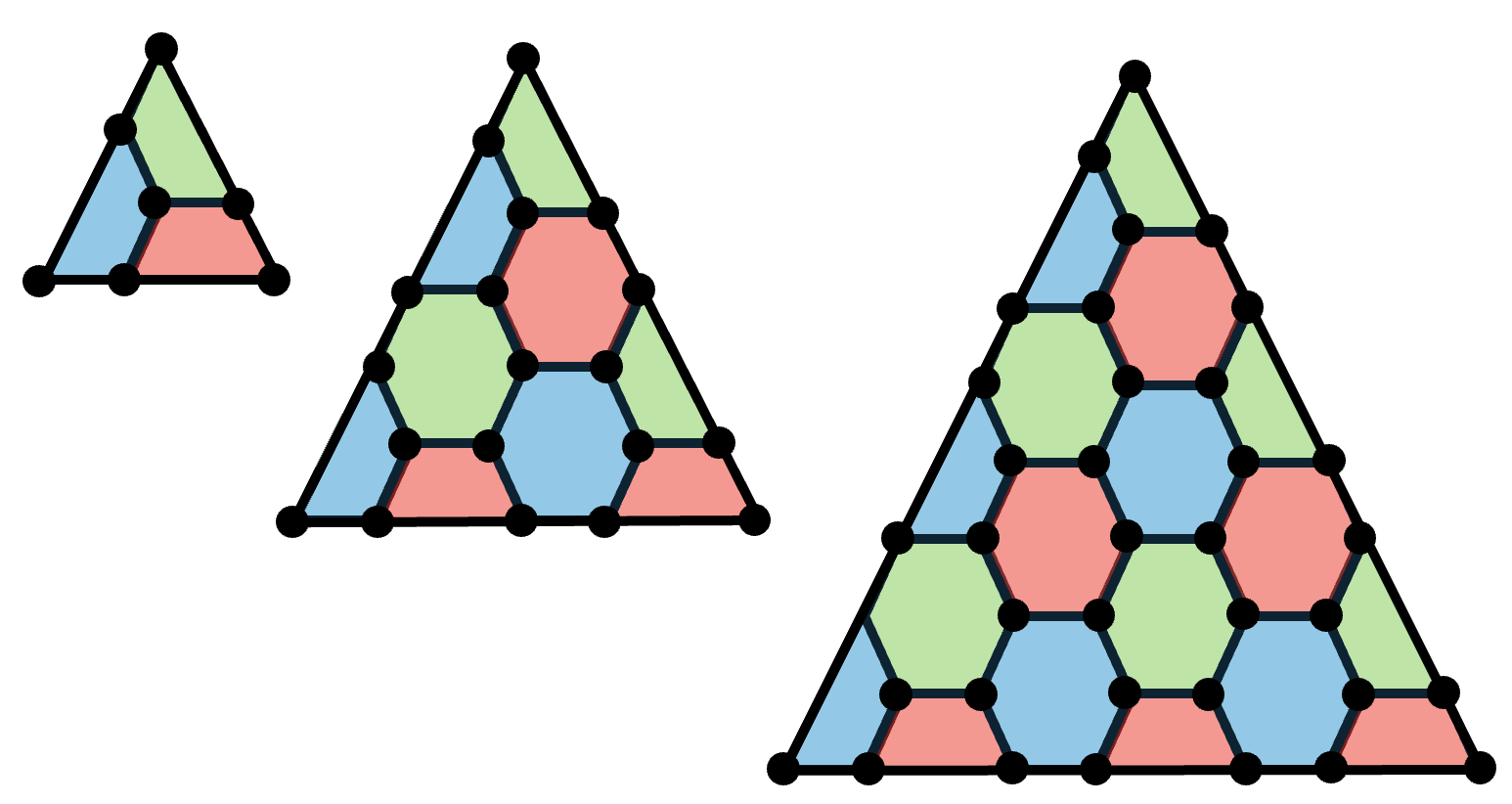}
    (b)\includegraphics[height=4cm]{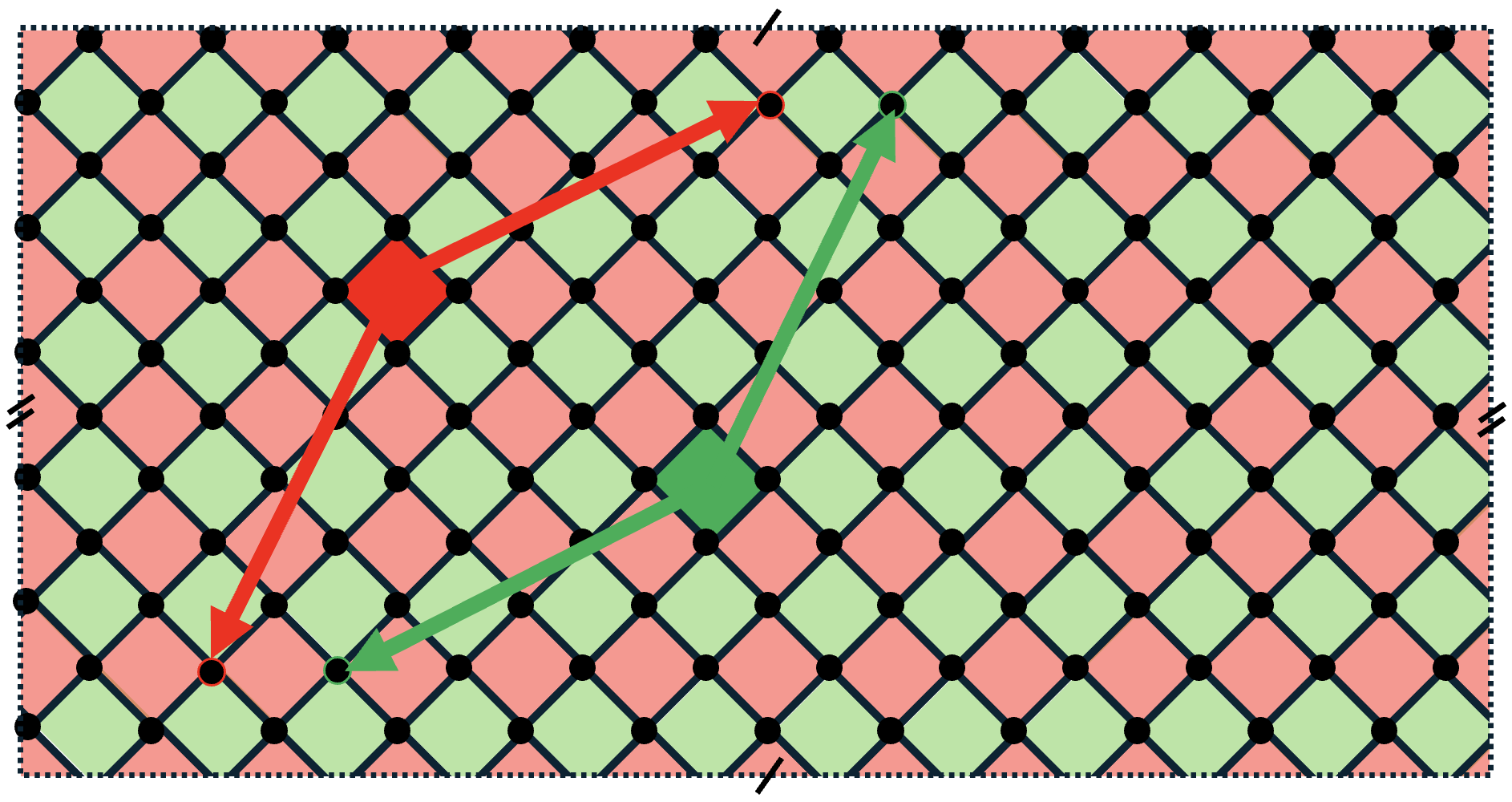}
     \caption{
     (a) Three instances of the color code family with distances 3, 5 and 7 (larger distance codes are defined by extending the pattern).
     Qubits are at vertices, with an $X$- and a $Z$-type stabilizer generator supported on the qubits of each face. 
     (b) The gross code with qubits at vertices on this tiling of the torus.
     There is a weight-6 $X$-type ($Z$-type) stabilizer generator supported on the four qubits on each red (green) square face and on two additional qubits located at fixed translated vectors relative to the face, indicated by red (green) arrows for one highlighted face.
     }
  \label{fig:example-codes}
\end{figure}

In what follows we define a number of important properties of a stabilizer code $\mathcal{S}$.
To distinguish some of these from other definitions for the more general space-time codes in the next section, we will add $\mathcal{S}$ as a superscript in places.

{\bf Checks:}
A check is a bit that captures the measurement outcome of a generator, taking a value $0$ for a code state in the absence of any error.
When a check is $0$ (corresponding to a +1 measurement outcome) we say it is `satisfied', and when it is $1$ (corresponding to a -1 measurement outcome) we say it is `unsatisfied'.

{\bf Pauli errors: } 
We can represent any $n$-qubit Pauli error $P_1 \otimes P_2 \otimes \dots P_n$ as a length $2n$ bit string $E = b_1 b_2, \dots b_{2n}$, where for $j \in 1,\dots n$ the pair of bits $(b_j,b_{n+j})$ are: $(0,0)$ if $P_j=I$, 
$(1,0)$ if $P_j=X$, 
$(0,1)$ if $P_j=Z$ and
$(1,1)$ if $P_j=Y$.

{\bf Check matrix:} 
The check matrix $H^\mathcal{S} \in \mathbb{F}_2^{m \times 2n}$ of the code is a binary matrix with $2n$ columns and $m$ rows.
(Note that $H^\mathcal{S}$ is the check matrix $H$ that is used for decoding the stabilizer code. 
We use the superscript $\mathcal{S}$ here on $H$ and $A$ to make it easier to refer back to these objects later.)
The entry $H^\mathcal{S}_{ij}$ is $0$ if the $i$th generator commutes with $X$ on the $j$th qubit, and $1$ otherwise, while the entry $H^\mathcal{S}_{i,n+j}$ is $0$ if the $i$th generator commutes with $Z$ on the $j$th qubit, and $1$ otherwise. 

The \emph{Tanner graph} $\mathcal{T}$ of the code is the graph which has the check matrix $H^\mathcal{S}$ as its adjacency matrix, and it is conventional to use circular nodes for errors and square nodes for stabilizer checks.
It is common to draw the $X$- and $Z$-type subgraphs of the Tanner graph, which we write as $\mathcal{T}_X$ and $\mathcal{T}_Z$ respectively, which correspond to the first and second sets of $n$ columns of $H^\mathcal{S}$ respectively. 
(A stabilizer generator which is neither purely $X$ nor purely $Z$ type will appear in both subgraphs).

{\bf CSS-type stabilizer codes:} 
If the code is a CSS~\cite{gottesman1996class,Calderbank1997} code, a set of generators exists which are each either purely $X$-type or purely $Z$-type, such that $H^\mathcal{S}$ is block diagonal, in which case $\mathcal{T}_X$ and $\mathcal{T}_Z$ are disjoint subgraphs of $\mathcal{T}$. 
Our convention for CSS codes is to list the $m_X$ $Z$-type stabilizer generators first, which form the top-left submatrix $H_X$ of $H^\mathcal{S}$, and to then list the $m_Z$ $X$-type stabilizer generators, which form the bottom-right submatrix $H_Z$ of $H^\mathcal{S}$.
(Note that this convention for CSS code matrices is the opposite of what is most common in the literature, but avoids the definition of the symplectic product and also is more parallel to our other definitions of check matrices which are defined by considering rows in terms of detections of faults.)
For example, the decoding graphs depicted in \fig{example-decoding-graphs} for $X$-type noise with perfect measurement correspond to the $X$-type Tanner graphs for the color and gross codes.
(Note: in some literature the Tanner graph is drawn more compactly for CSS codes by combining the node for each error pair $X_i$ and $Z_i$ such that there is a single node corresponding to each data qubit $i$, which is unambiguous since each check node corresponds to a pure $X$- or $Z$-type stabilizer generator.)

{\bf Syndrome: }  
Let $E$ be the length-$2n$ bit string that represents an $n$-qubit Pauli error. 
The \emph{syndrome} $\sigma^\mathcal{S}$ of $E$ is $\sigma^\mathcal{S} = H^\mathcal{S} E$ (with arithmetic modulo two) is the list of $m$ measurement outcomes that would be obtained if the stabilizer generators of the code were measured on a code state with $E$ applied to it. 
In other words, the syndrome is a vector of check outcomes ordered according to the rows of $H^\mathcal{S}$.

{\bf Logical action matrix: }  
Any Pauli operator which commutes with all stabilizer generators but is not in either $\mathcal{S}$ or $-\mathcal{S}$ is a non-trivial logical operator (stabilizers are sometimes referred to as trivial logical operators).
We define the distance $d$ as the smallest weight of any non-trivial logical operator.
Two logical operators $B$ and $C$ are equivalent if there is a stabilizer $S \in \mathcal{S}$ such that $B \propto S C$.
The number of independent stabilizer generators is $2k$, where $k$ is the number of logical qubits encoded by the code.
We form the logical action matrix $A^\mathcal{S} \in \mathbb{F}_2^{2k \times 2n}$, which is a binary matrix with $2n$ columns and $2k$ rows, by taking each row to be the bit string representing an independent logical operator.
It is common  use the notation $[[n,k,d]]$ to specify these important properties of a stabilizer code.

\subsection{Logical memory circuits for stabilizer codes}
\label{sec:space-time-codes}

Here we assume a circuit which measures a set of stabilizer generators of an $[[n,k,d]]$ stabilizer code with stabilizer group $\mathcal{S}$, repeated over $R$ rounds.
We will assume that the last round is fault-free (this assumption can be relaxed to form a windowed-decoding strategy, which we will not discuss in this work for simplicity).
The presentation intentionally mirrors the previous subsection on stabilizer codes to highlight analogies.

{\bf Circuit checks: }
The circuit outputs a length-$M$ bit string of \emph{circuit checks}, each of which is a bit formed from the parity of the measurement outcomes of a stabilizer generator on two consecutive rounds, which is 0 in the absence of faults.
When a circuit check is zero we say it is `satisfied', and when it is one we say it is `unsatisfied'.
In some literature~\cite{Gidney2021stim}, circuit checks are known as `detectors'.
Given $R$ rounds, and $m$ stabilizer generators of the code, there will be $M = R \cdot m$ circuit checks.

{\bf Circuit faults: }
These can be thought of as a generalization of Pauli errors for a stabilizer code.
We assume that an explicit set of $N$ distinct faults can occur.
Each of the $N$ faults has the effect of flipping a subset of detector outcomes, and leaves a residual Pauli error at the end of the circuit. 
We use the fault bit string $F \in \mathbb{F}_2^N$ to denote a set of faults (indicated by 1s in the bit string) occurred.

{\bf Circuit check matrix: }
The circuit check matrix $H \in \mathbb{F}_2^{M \times N}$ is a binary matrix with $N$ columns and $M$ rows.
For $i \in 1,\dots M$, the entry $H_{ij}$ is $1$ if the $i$th detector outcome is flipped by the $j$th fault, and $0$ otherwise.

{\bf Circuit syndrome: } 
The syndrome of a set of faults $F$ is the bit string $\sigma$ of detector values that result from $F$, which is obtained from the circuit check matrix as $\sigma = H F$ (with arithmetic modulo two).

{\bf Circuit logical action matrix: }
The circuit logical action matrix $A \in \mathbb{F}_2^{2k \times N}$ encodes the effect of the residual Pauli at the end of the circuit caused by each fault.
(Note that the logical action matrix needs to be modified if the space time code implements a non-trivial logical operation such as measuring a logical operator rather than just syndrome extraction~\cite{beverland2024fault}.)
More specifically, suppose that the $j$th fault results in a residual error $E_j$ (in binary symplectic form) at the end of the circuit.
Then the $j$th column $A_{*j}$ of $A$ is given by $A_{*j} = A^\mathcal{S} E_j$, where $A^\mathcal{S}$ is the logical action matrix of the code.
The \emph{circuit distance} $d$ is the minimum weight of any circuit logical.

We can also define a \emph{circuit stabilizer group} as follows.
Given the full set of $N$ faults that can occur, a set of faults $S \in \mathbb{F}_2^N$ is a circuit stabilizer if and only if $HS=0$ and $AS=0$. 
The set of all such faults forms a group.

\subsection{Logical operation circuits for stabilizer codes}
\label{sec:stabilizer-channels}

Here we review decoding for a more general setting where quantum circuits are used to build logical operations on stabilizer codes.
We do not test our decoders on logical operations in this paper, but we hope that this description may be useful for reference.

The general description of logical operation circuits we review here has been considered in many different works and is referred to by many names, including the \emph{detector noise model}~\cite{Gidney2021stim}, \emph{spacetime codes}~\cite{delfosse2023spacetime,bacon2017sparse,gottesman2022opportunities}, \emph{logical blocks}~\cite{Bombin2023}, and \emph{stabilizer channels}~\cite{beverland2024fault}.
All of these works include descriptions of logical operation circuits which are somewhat equivalent to what we review here, but our notation and presentation most closely aligns with the material in Refs.~\cite{beverland2024fault,kliuchnikov2023stabilizer}.

\textbf{Stabilizer channels: }
A \emph{stabilizer channel} is a circuit composed of \emph{stabilizer operations}—Pauli basis state preparations, measurements, Clifford unitaries, and conditional Pauli measurements based on parities of earlier measurement outcomes. 
Such a circuit transforms an initial $[[n,k,d]]$ stabilizer code into a (possibly different) output $[[n',k',d']]$ stabilizer code. 
In this context, a decoder's task is to detect and correct faults in the circuit, ensuring robust logical action.
We will assume in the rest of this discussion that the circuit is a stabilizer channel.

\textbf{Logical action: }
The logical action of any (stabilizer channel) circuit has a simple form~\cite{beverland2024fault} as shown in \fig{logical-op-circuit}.
Its action is to first measure a set of $l$ independent commuting logical Pauli operators for the input code, and then apply an isometry from the resulting $(k-l)$-qubit code space to the $k'$-qubit output code space.
By analyzing the independent logical failure modes of the circuit, we see that $A$ has $K = k + k'$ rows.

\begin{figure}[ht]
  \centering
    (a)\includegraphics[height=2.4cm]{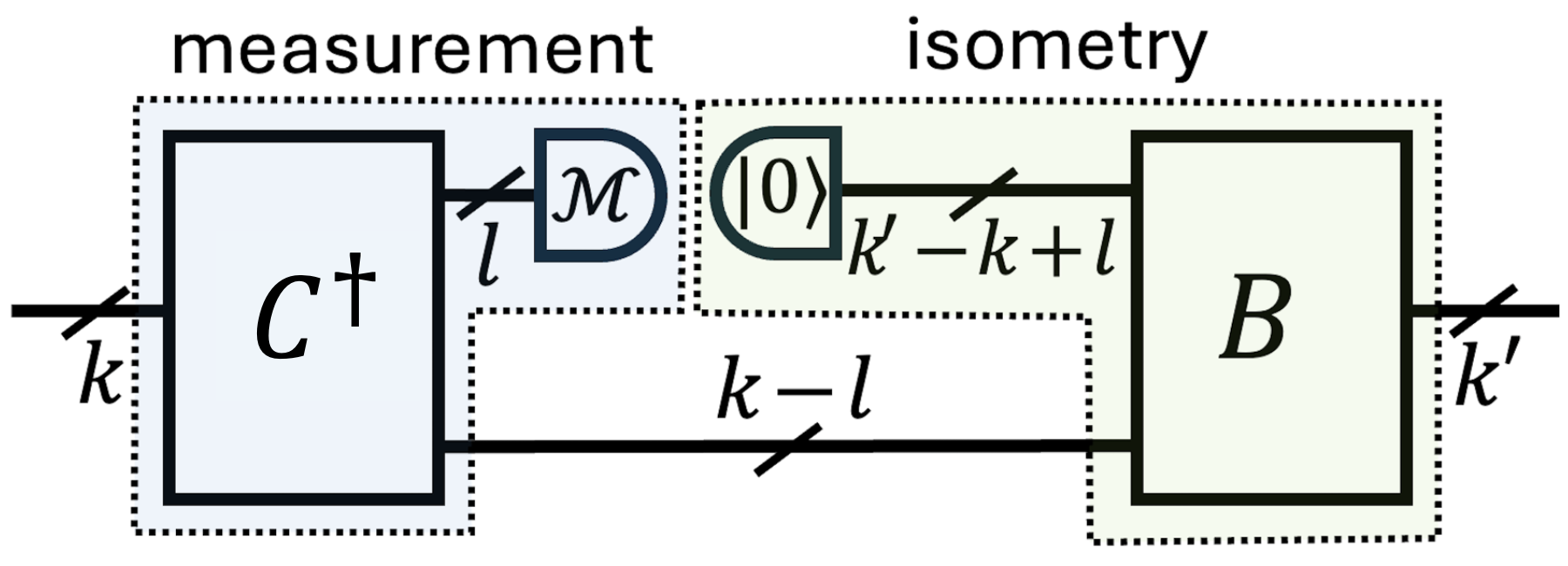}
    (b)\includegraphics[height=1.8cm]{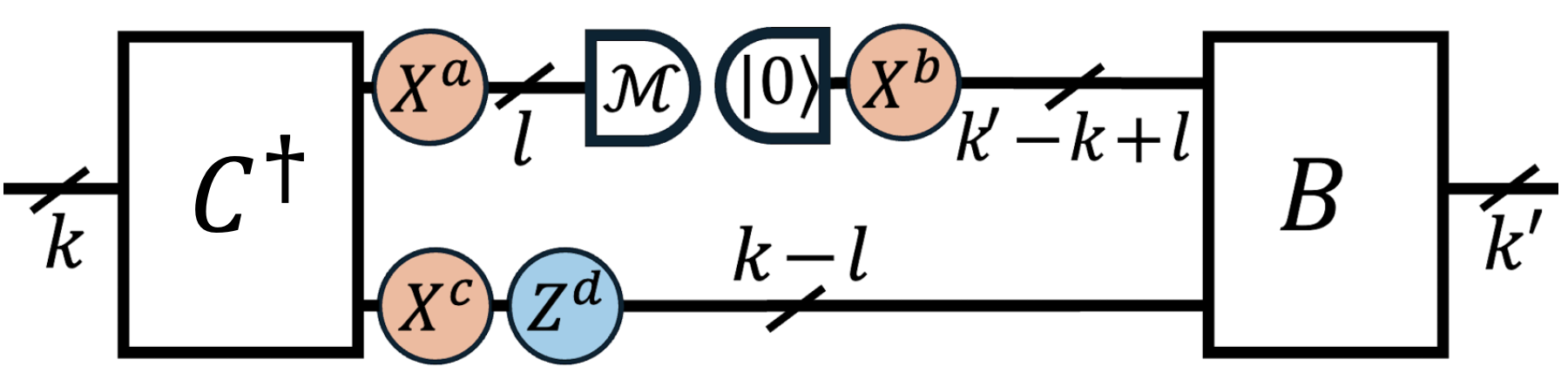}
     \caption{
    (a) The logical action of any stabilizer channel from $k$ to $k'$ logical qubits is to measure a set of $l$ independent commuting logical Paulis of the input code, and to apply an isometry from the remaining $k-l$ to $k'$ logical qubits of the output code.
    This is fully specified by $k$-dimensional and $k'$-dimensional Clifford unitaries $C$ and $B$, and the number $l = 0, 1, 2,\dots k$.
    (b) We use orange and blue disks to indicate logical $X$- and $Z$-type faults in this logical circuit that can have non-trivial action (note that a $Z$ error after a $|0\rangle$ preparation is trivial).
    We can specify all logical failures by the bitstrings $a\in\mathbb{F}_2^{l}$, $b\in\mathbb{F}_2^{k'-k+l}$, $c\in\mathbb{F}_2^{k-l}$ and $d\in\mathbb{F}_2^{k-l}$.
    Each of the $k+k'$ rows of $A$ corresponds to a bit in one of these bitstrings.
     }
  \label{fig:logical-op-circuit}
\end{figure}

\textbf{Detectors and logical outcomes: }
The circuit will in general include many measurements, although these are not necessarily associated with repeated stabilizer measurements for a stabilizer code.
Each detector is formed from the parity (or the complement of the parity) of a set of measurement outcomes, with trivial value in the absence of faults.
Each logical measurement outcome of the circuit is formed from the parity (or the complement of the parity) of a set of measurement outcomes, with the value equal to the logical outcome in the absence of faults.

To identify the sets of circuit outcomes forming detectors and logical outcomes, consider a modified circuit where each operation occurs in a separate time step, creating a sequential structure. 
In a fault-free stabilizer circuit, each measurement outcome is either~\cite{beverland2024fault}: (i) uniformly random, (ii) fixed (determined by the parity of a set of earlier measurement outcomes), or (iii) dependent on the logical state of the input code (specified by the parity of a set of earlier outcomes and the logical measurement outcome). 
A detector corresponds to each type (ii) outcome and includes that outcome and the earlier circuit outcomes it depends on. 
Similarly, a logical outcome corresponds to each type (iii) measurement, including that outcome and the earlier circuit outcomes it depends on.

\textbf{Constructing decoding matrices: }
Each fault is a distinct event that can: (a) introduce a Pauli operator on qubits in the circuit at some time (which can be pulled through the circuit to form a residual Pauli on the output), (b) cause detector outcomes to flip, (c) cause some logical measurement outcome to flip. 
For $N$ faults, and $M$ detectors, the check matrix is formed as usual, with a column for each fault and a row for each detector, with $H_{ij}=1$ iff detector $i$ is flipped by fault $j$.
Similarly, each fault can result in logical failures as specified in \fig{logical-op-circuit} such that the logical action matrix has a column for each of the $N$ faults and a row for each of the $K$ logical failure modes, with $A_{ij}=1$ iff fault $j$ leads to failure mode $i$.

Note that the resulting check and logical action matrices $H$ and $A$ may not be in the most convenient form.
Linearly independent recombinations of rows can be used to reduce the column weight etc.

\clearpage

\section{Decoding formalism}
\label{sec:decoding}

Here we fix the notational conventions needed for the rest of this work.

\subsection{Basic definitions}
\label{sec:basic-defs}
The primary definitions we will work with are in \defn{decoder}.

\begin{dfn}[Faults and decoding]
\label{defn:decoder}
Consider a check matrix $H \in \{0,1 \}^{M \times N}$, a logical action matrix $A\in \{0,1 \}^{K \times N}$, a probability vector $p \in (0,1/2)^{N}$ and a weights vector $w \in (0,\infty)^{N}$. 

Let the fault bitstring $F \in \{0,1\}^N$ be randomly drawn according to the distribution
$$ \Prob(F) =   \prod_{j =1}^N (1-p_j) \left(\frac{p_j}{1-p_j}\right)^{F_j}.$$
Given $F$, the syndrome $\sigma \in \{0,1 \}^{M}$ is obtained from $\sigma = H F$.
A \emph{decoder} is a classical algorithm which, given $\sigma \in \{0,1 \}^{M}$ (and with knowledge of the objects $H$, $A$ and $w$), proposes a correction $\hat{F}\in \{0,1\}^N$.
We say that the decoder \emph{succeeds} if both $H \hat{F}=\sigma$ and $A\hat{F} = AF$, and that it \emph{fails} otherwise. 
\end{dfn}
Arithmetic involving binary objects such as $\sigma = H F$ etc., are modulo two here and elsewhere.

The decoding problem phrased in \defn{decoder} is very general and applies to many settings of interest.
The objects $H,A,p,w$ here can be specified for a classical code, a quantum stabilizer code, a fault-tolerant stabilizer circuit that implements a stabilizer code, or something even more general such as a space time code. 
We include a discussion of these cases that explains how the objects arise in each case in \sec{decoding-scenarios}.

{\bf Low Density Parity Checks: }
We will assume that the check matrix $H$ is sparse, and more specifically, that it has a maximum column weight $c$ and maximum row weight $r$.
In other words, we are focused on decoding for quantum Low Density Parity Check (qLDPC) codes.

{\bf Weights: } 
Instead of providing the decoder with the probability vector $p$, we instead provide it with a \emph{weight} vector $w \in (0, \infty)^N$. 
For a fault bitstring $F$ we define its weight by
$$w(F) = \sum_{j=1}^N w_j F_j. $$
Unless otherwise stated, we will take the weight of fault $j$ to be given by the log-probability ratio (LPR), i.e., $w_j = \log \frac{1-p_j}{p_j}$, in which case $p$ and $w$ form a bijection and are thus equivalent.
We will make it clear when we find it convenient to assume a different choice of weight vector, such as \emph{uniform weights} where $w_j=1$ for all $j$ (in which case $w(F)$ is the Hamming weight of $F$).

{\bf Min-weight decoding: } 
A sub-optimal, but common strategy which often performs well in practice is where the fault bitstring $F$ with syndrome $\sigma$ with the minimum weight $w(F)$ is output, i.e.,
$$\hat{F}_\text{min-wt} = \mathop{\arg \min}_{F|HF=\sigma} (w(F)).$$
Note that (when weights are LPRs) finding an $F$ with minimal $w(F)$ is equivalent to finding an $F$ with maximum $\Prob(F)$, such that min-weight decoding is often called most-likely error (MLE) decoding.
For general stabilizer codes, min-weight decoding is NP-hard~\cite{hsieh2011}.

{\bf Optimal decoding: } 
The optimal decoding strategy, called most-likely coset (MLC) decoding, is to find a fault bitstring with syndrome $\sigma$ which maximizes the conditional probability of the coset of equivalent fault bitstrings, i.e.,
$$\hat{F}_\text{opt} = \mathop{\arg \max}_{F|HF=\sigma} \left(\sum_{S | HS =0, AS=0}\Prob(F + S) \right).$$
For general stabilizer codes, optimal decoding is \#P complete~\cite{iyer2015} (even harder than NP-hard).
We know of no results on the computational difficulty of general quantum low density parity check (LDPC) codes (where the check matrix has bounded row and column weight).

{\bf Syndrome height: } 
It is useful for some of our decoding strategies to define the \emph{syndrome height} $h(\sigma)$ as the weight of the lowest-weight fault configuration $F$ which has syndrome $\sigma$, i.e.,
$$ h(\sigma) = \min_{F | HF=\sigma} w(F). $$
Unless otherwise stated, the syndrome height will be computed using uniform weights.

\textbf{Decimation:}
It can be a useful step in some decoders to assume a subset of fault bits are included in a correction and then to consider the residual decoding problem that results.
Consider a check matrix $H$, a probability vector $p$ and a syndrome $\sigma$.
Given any fault bitstring $F$, we can define:
\begin{itemize}
    \item the \emph{decimated check matrix} $H_F$ is obtained by removing each $j$th column from $H$ where $F_j=1$,
    \item the \emph{decimated probability vector} $p_F$ is obtained by removing each $j$th element from $p$ where $F_j=1$.
\end{itemize}

\subsection{Decoding graph and set representation of bitstrings}

We often find it convenient to work with a graphical representation of $H$ (see \fig{example-decoding-graphs}).
The \emph{decoding graph}\footnote{One may wonder why we do not call $G$ the Tanner graph -- the reason is that in places we will discuss the same code under different noise models and we reserve the term Tanner graph for the defining representation of the code rather than having it depend on the noise model. For stabilizer codes, the decoding graph is the Tanner graph. 
This is discussed further in \sec{decoding-scenarios} and \sec{noise}.} $G$ is the bipartite graph which has biadjacency matrix $H \in \{0,1 \}^{M \times N}$.
There are two types of vertices: fault vertices (corresponding to columns of $H$), and check vertices (corresponding to rows of $H$).
We write $\calN(v)$ to specify the neighbors of a vertex $v$ in $G$.
We write $\calN(V) = \calN(v_1) \cup \calN(v_2) \cup \dots$ for a set of vertices $V = \{v_1,v_2,\dots \}$.
The \emph{fault degree} $c$ (\emph{check degree} $r$) is the maximum weight of any individual column (row) of $H$.

\begin{figure}[ht]
  \centering
    (a)\includegraphics[height=3.5cm]{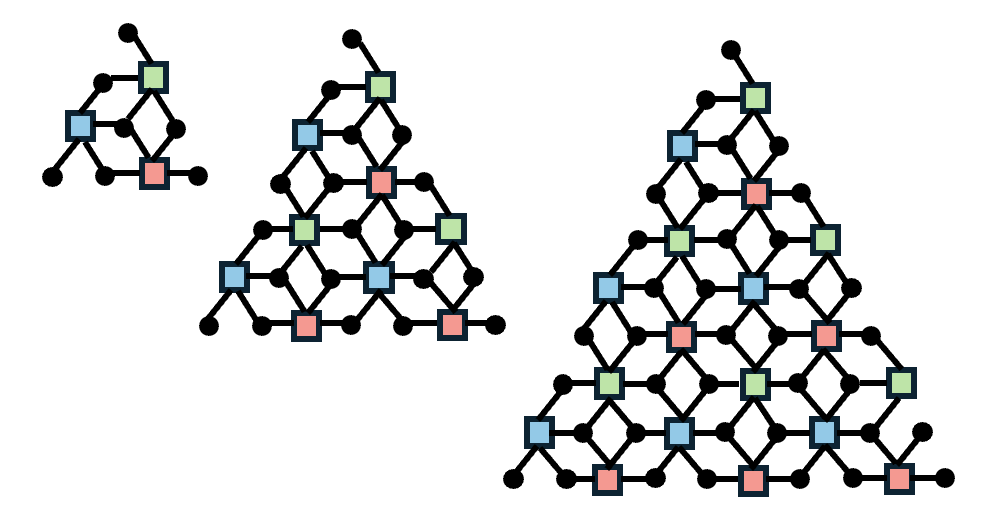}
    (b)\includegraphics[height=4cm]{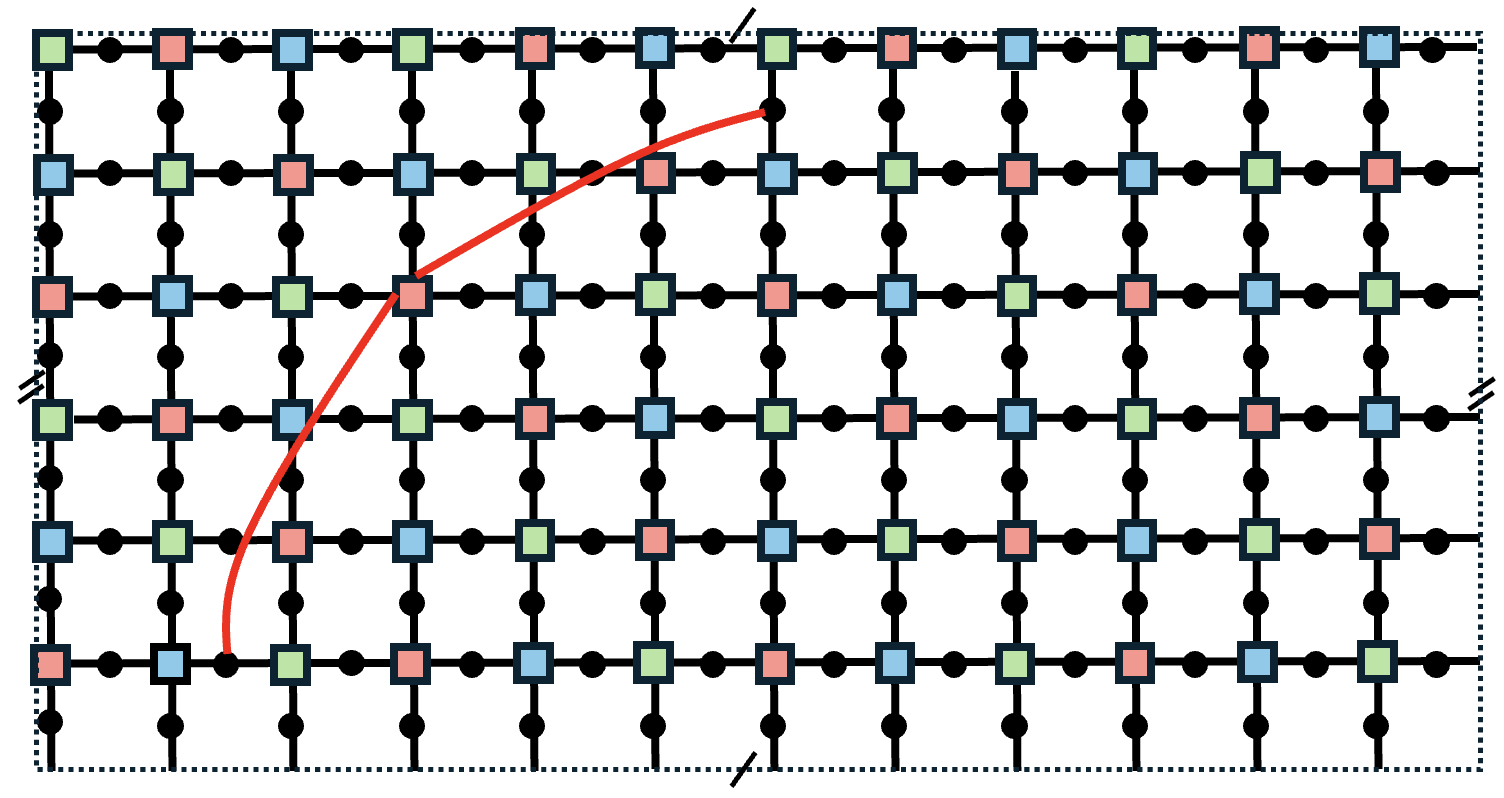}
     \caption{
     Decoding graphs for $X$-type noise with perfect measurements.
     There is a colored square vertex for each check, and a circle vertex for each fault (single-qubit $X$ error).
     (a) For three instances of the color code family with distances 3, 5 and 7 (larger distances can be obtained by extending the pattern).
     (b) For the gross code, where we draw only one pair of non-local edges (red), but there is a pair (not drawn) for for all square check vertices in the graph.
     See \fig{example-codes} for definitions of these codes.
     }
  \label{fig:example-decoding-graphs}
\end{figure}

We use $j = 1,2,\dots N$ to denote a fault vertex.
We then use $F$ to denote a set of fault vertices in two representations: as a set $F = \{j_1,j_2, \dots \}$, and as bit string $F \in \{0,1 \}^N$ (with $F_j = 1$ for $j \in \{j_1,j_2, \dots \}$).
Similarly, we use $\sigma$ to denote a set of check vertices both as a set $\sigma = \{i_1,i_2,\dots \}$ and as a bit string representation $\sigma \in \{0,1 \}^M$ (with $\sigma_i = 1$ for $i \in \{i_1,i_2, \dots \}$).
When $B$ and $C$ are interpreted as sets, we use $B+C$ to denote their symmetric difference, consistent with $B+C$ being modulo-two addition when $B$ and $C$ are interpreted as bit strings.
When $B$ and $C$ are interpreted as bit strings, we use $B \cup C$ to denote the (non-exclusive) or, consistent with $B\cup C$ being the union when $B$ and $C$ are interpreted as sets. 
Lastly, for $B$ and $C$ interpreted as sets, when $C \subseteq B$ we write $C-B$ to specify set difference for clarity (though this is equivalent to writing $B+C$).
The syndrome $\sigma$ of a fault set $F = \{j_1,j_2, \dots \}$ is $\sigma = \calN(j_1) + \calN(j_2) + \dots = \calN(F) =H F$.

\subsection{Belief propagation decoding}
\label{sec:decoding-algorithms}

Here we review the belief propagation (BP) decoding algorithm.
We use BP as a subroutine of some of the decoders we present in this work.

BP is a message-passing algorithm on the decoding graph which was originally designed for classical codes.
There are many versions of BP, we use a relatively standard version which was used for quantum codes in~\cite{panteleev2021degenerate, roffe2020}. 
BP can be thought of as an algorithm which, given the observed syndrome $\sigma$, estimates the posterior probability $\Prob(F_j=1|\sigma)$ that the $j$th fault occurred. 
The algorithm obtains this estimate by passing messages back and forth between fault vertices and check vertices over a sequence of iterations.

More specifically, the message passing is initialized by each fault vertex $j$ passing neighboring checks $i \in \calN(j)$ messages consisting of its LPR:
$$\mu^{(0)}_{j \rightarrow i} = \log\frac{1-p_j}{p_j}. $$
This is followed by alternating rounds of \emph{check-to-fault message} updates according to the minimum-sum rule 
$$\mu^{(t)}_{i \rightarrow j} =(-1)^{\sigma_i} \prod_{j' \in \calN(i) \setminus \{j\}} \mathrm{sign}(\mu^{(t-1)}_{j' \rightarrow i}) \cdot \min_{j' \in \calN(i) \setminus \{j\}} \card{\mu^{(t-1)}_{j' \rightarrow i}},$$
corresponding LPR updates
$$ \Lambda_j^{(t)} = \log\frac{1-p_j}{p_j} + \sum_{i \in \calN(j)} \mu^{(t)}_{i \rightarrow j},$$
and \emph{fault-to-check message} updates
$$\mu^{(t)}_{j \rightarrow i} = \Lambda^{(t)}_j - \mu^{(t)}_{i \rightarrow j}.$$
At every timestep $t$, the error pattern $\Hat{F}$ is estimated from the posterior log-probability ratios with $$\Hat{F}_j =\begin{cases} 0 \quad \mathrm{if} \quad \Lambda^{(t)}_j \geq 0 \\
1 \quad \mathrm{if} \quad \Lambda^{(t)}_j < 0 \end{cases} $$
The algorithm stops if the estimated error $\Hat{F}$ is consistent with the syndrome $\sigma=H \Hat{F}$, in which case we say BP converged, or if the maximum number of iterations is reached without providing a valid solution, in which case we say BP did not converge.

We denote the final LPR estimate of fault $j$ output by the last iteration as $\Lambda_\text{BP}(j) = \Lambda^{(t_\text{end})}_j$.

We also denote the arithmetic mean of the LPRs 
over the last $l_\text{buff}$ BP rounds as
\begin{eqnarray*}
\bar{\Lambda}_j = \sum_{l=0}^{l_\text{buff}-1} \Lambda^{(t_\text{end} - l)}_j.
\end{eqnarray*}
This can smooth possible oscillations occurring during BP similar to \cite{gong2024}.

Finally, we define the decimated LPR estimate $\Lambda_\text{BP}(j \,| \, \sigma, F)$ which is the $\Lambda_\text{BP}(j)$ obtained when running BP to decode $\sigma$ on the decimated check matrix $H_F$, and accordingly $\bar{\Lambda}_\text{BP}(j \,| \, \sigma, F)$.

A well-known disadvantage of BP is that it sometimes does not converge, especially if the decoding graph $G$ contains loops or if there are equivalent corrections with the same weight.
A common strategy is to use a follow-up decoder to handle these cases, with ordered statistics decoding being among the most commonly used option. 
We use the resulting two-step decoder known as BP-OSD to compare some of our decoders, and we review BP-OSD briefly in \app{bp-osd}.

\subsection{MaxSAT decoding}
\label{sec:maxsat-intro}

Here we briefly review the use of reductions to MaxSAT to form general qLDPC decoders~\cite{Berent_2024, noormandipour2024maxsatdecodersarbitrarycss}.

The MaxSAT problem involves finding a boolean variable assignment $b_l$ ($l = 1, 2, 3, \dots$) that satisfies the maximum number of clauses. 
Each clause is defined in terms of boolean variables using logical operations such as logical OR ($\lor$), logical XOR ($\oplus$), and logical NOT ($\bar{b}$, the negation of a boolean variable $b$). 
For example, typical clauses might include $(b_1 \lor \bar{b_2})$ and $((\bar{b_1} \oplus b_2) \lor b_3)$. 
By assigning weights to clauses, the MaxSAT problem generalizes to finding the assignment that minimizes the weight of unsatisfied clauses. 
For decoding, it is useful to distinguish between hard clauses (with infinite weight), which must be satisfied, and soft clauses (with finite weight), where minimizing the total weight of unsatisfied clauses is the objective.

The minimum-weight decoding problem can be formulated as a MaxSAT problem, where the boolean variables are the bits of the correction $F_j$.
Hard clauses are derived from the syndrome equation $\sigma = HF \mod 2$, expressed as:
\[
\sigma_i \oplus \left(\bigoplus_{j=1}^N H_{ij} F_j\right), \quad i = 1, \dots, M.
\]
Soft clauses arise from minimizing the weight $w(F)$, introducing $N$ clauses:
\[
\bar{F_j}, \quad \text{with weight } w_j, \quad j = 1, \dots, N.
\]    

Constructing an end-to-end decoder via MaxSAT reduction requires solving the MaxSAT instance. Ref.~\cite{Berent_2024} used Z3~\cite{10.1007/978-3-540-78800-3_24}, while Ref.~\cite{noormandipour2024maxsatdecodersarbitrarycss} found Open-WBO~\cite{10.1007/978-3-319-09284-3_33} most effective. 
Practical solvers often perform better with reformulated clauses (different, but logically equivalent to those stated above), such as converting the problem into Max 3-SAT in conjunctive normal form using auxiliary variables, as described in \cite{noormandipour2024maxsatdecodersarbitrarycss}.  
MaxSAT decoding achieves significantly lower logical error rates than heuristic methods like BP-OSD but is substantially slower, even with state-of-the-art solvers optimized for MaxSAT benchmarking competitions \cite{Berent_2024, noormandipour2024maxsatdecodersarbitrarycss}. 

One challenge of the MaxSAT mapping approach to qLDPC decoding is that it seems quite difficult to incorporate structure to improve the performance for specific code families, and also difficult to incorporate the notion of stabilizer equivalence that exists in decoding of quantum codes.
In \app{maxsat-comparison}, we compare the runtime of our minimum-weight decoder (height-bound DTD) with data for the MaxSAT decoder in \cite{Berent_2024, noormandipour2024maxsatdecodersarbitrarycss}.

\clearpage

\section{Decision-tree decoding}
\label{sec:main-algo}

In this section we first define the decision tree in \sec{decision-tree-object}, then introduce the main decision-tree decoding (DTD) \alg{decision-tree-decoder} in \sec{main-DTD-algo}. 
The main DTD algorithm is defined with respect to a subroutine which specifies how the decision tree is explored.
In \sec{breadth-first} and \sec{oracle} we introduce and analyze some conceptually simple exploration subroutines that serve as a warm-up to build intuition for the more practically relevant ones which we provide later in \sec{provable-cost} and \sec{fast-cost}.

\subsection{Decision tree}
\label{sec:decision-tree-object}

The \emph{decision tree} is the main object needed to understand the family of decision-tree decoders.
Here we reiterate and expand upon the description of the decision tree that we provided in \fig{intro-figure} of \sec{intro}. 

The decision tree $\decisionTree$ is defined for a particular syndrome $\sigma$ on the decoding graph $G$.
Start by including a root node for $\decisionTree$. 
Next, take a check vertex $i$ in the set $\sigma$ and add a child to the root node for each neighboring fault $j \in \calN(i)$. 
Each child (labeled by $(j)$ for $j \in \calN(i)$) in turn has its own children (labeled by $(j,j')$) for $j' \in \calN(i')\setminus \{ j \}$, where $i'$ is a check vertex in the set $\sigma+\calN(j)$, the \emph{updated syndrome} that would be obtained by applying the fault $j$ when the original syndrome was $\sigma$.
The full $\decisionTree$ is obtained by continuing this procedure iteratively until all branches terminate in a leaf. 
Each node in $\decisionTree$ is then uniquely labeled by the path of faults which were taken to reach it.

Note $\decisionTree$ is never fully constructed in decision-tree decoding, rather parts of it are generated procedurally, as we explore it.

\begin{figure}[ht]
  \centering
    \includegraphics[width=1.0\textwidth]{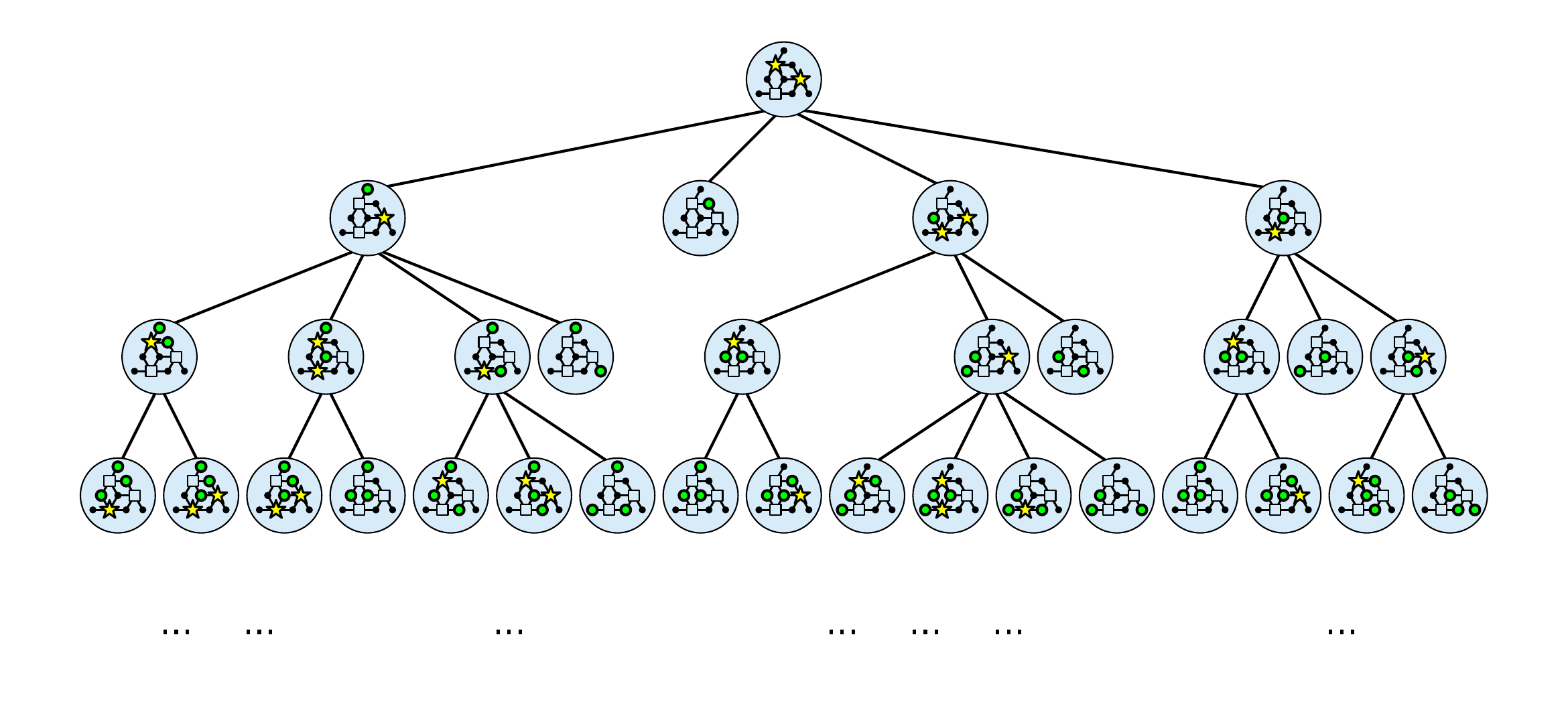}
     \caption{
     First 4 levels of the decision tree $T$ of the Steane code~\cite{steane1996multiple} for the syndrome given in the root node. 
     (We draw the Steane code as the $d = 3$ color code as in \fig{example-codes}(a,c)).
     In each node we show the updated syndrome (highlighted with yellow stars) and the decisions taken to reach it (highlighted green fault vertices), with the original syndrome $\sigma$ at the root node. 
     }
  \label{fig:decision-tree}
\end{figure}

It can be useful to visualize a decoding graph for each node in $\decisionTree$, where we highlight the set $F$ of fault vertices in the path to the node, and also the updated syndrome $\sigma+HF$ (see \fig{decision-tree})).
Each node in the decision tree can be considered as an attempted correction that could be applied, which succeeds if the updated syndrome is trivial. 
Every minimum weight correction corresponds to a leaf in the tree. 
Since only the set (and not the order-specific sequence) of faults $F$ in a path fully specify the applied correction, and also the sub tree starting from this node, $\decisionTree$ contains a lot of redundancy. \alg{decision-tree-decoder}, to be discussed later, avoids exploring these equivalent sub trees.

The decision tree is very large, but it is always finite for any given syndrome $\sigma$ and decoding graph $G$.
To see that all branches in the tree $\decisionTree$ terminate on a leaf, consider a node in the tree that has been reached by a path through a set of faults $F$, and note that any children of this node must correspond to a fault $j \in \calN(\sigma+HF)\setminus F$.
If $\calN(\sigma+HF)\setminus F$ is empty, then the node has no children and is therefore a leaf. 
Therefore, a path through the tree must terminate at the latest when $\card{F} = N$ and so the tree has at most depth $N$.
Note that every solution node (i.e., when $\sigma+HF = \emptyset$) is a leaf, but not every leaf corresponds to a solution (when $\sigma+HF \neq \emptyset$ but $\calN(\sigma+HF) \subseteq F$).
The longest path from the root to a leaf is then at most as long as the total number of fault nodes in $G$.
One final point to note is that, the decision tree is constructed by picking any check vertex of the updated syndrome at each step, and as such the tree depends on that choice.
The tree arising from any such choice can be used for the algorithm.

\subsection{Main algorithm and exploration subroutine}
\label{sec:main-DTD-algo}

Here we introduce the main decision tree decoding (DTD) \alg{decision-tree-decoder}.
At a high level, the algorithm searches for a correction for the syndrome $\sigma$ by exploring the decision tree $\decisionTree$, starting at the root. 
As the algorithm proceeds it dynamically generates or grows parts of $\decisionTree$ until a solution is found.
More specifically, it stores the set $\tree$ of all `seen' nodes, and separately stores a set $\leaves$  of `live' nodes, which retains the subset of seen nodes, which have not yet been explored. 

To guide the algorithm's exploration, each node in $\leaves$ is assigned a \emph{cost}.
In each round, the lowest-cost live node is selected and the algorithm stops if it provides a valid correction. 
Otherwise, we run an \texttt{Explore} subroutine, which 'explores' this node in the decision tree, that is it identifies the faults that lead to its children and assigns them a cost for further exploration. 
All variants of the decision-tree decoder family differ only by this \texttt{Explore} subroutine, which defines order the search space is explored through the assignment of cost. Additionally, the \texttt{Explore} subroutine can directly return a solution.
If $\leaves$ becomes empty at any point in the algorithm, without having produced a solution, the algorithm terminates indicating that no correction exists that is consistent with the syndrome.

\begin{algorithm}
    \caption{Decision-tree decoder (main)}
    \label{alg:decision-tree-decoder}
    \begin{algorithmic}[1] 
        \Procedure{Decode}{$\sigma_\text{in}$} \Comment{Decode syndrome $\sigma_\text{in}$}
            \State $\tree \gets \{ \emptyset \}$ \Comment{Initialize seen tree nodes} \label{lst:line:init1}
            \State $\leaves \gets [(\emptyset, \sigma_\text{in}, \zeroc)]$ \label{lst:line:init2}  \Comment{Initialize live nodes (fault set, syndrome, cost)}
            \While{$\leaves$ \textbf{not} empty}
                \State ($F, \sigma,  C) \gets \texttt{first}(\leaves)$ \label{lst:line:extract} \Comment{Extract cheapest node}
                \If{$\sigma$ empty} \label{lst:line:start_check} 
                    \State \textbf{return} $F$ \Comment{Finish if correction found}
                \EndIf \label{lst:line:end_check}

                \Statex \hspace{\algorithmicindent} \hspace{8pt} \hrulefill 
                \State \textbf{Exploration subroutine:} \label{lst:line:exploration_block}
                \Statex \hspace{\algorithmicindent} \hspace{8pt} \textit{Identifies the fault vertices $j_l$ leading to the current node's children. }
                \Statex \hspace{\algorithmicindent} \hspace{8pt} \textit{Assigns them a cost $C^{(j_l)}$.}
                \Statex \hspace{\algorithmicindent} \hspace{8pt} \textit{\textbf{Computes:} $\{(j_1,C^{(j_1)}),\dots, (j_b, C^{(j_b)}) \}$}
                \Statex \hspace{\algorithmicindent}   \hspace{8pt} \hrulefill 
                
                \For{$j \in \{j_1,\dots, j_b\}$ }  \label{lst:line:start_grow}
                    \State $F^{(j)} \gets F \cup \{ j \}$ \Comment{Updated fault set}
                    \State $\sigma^{(j)} \gets \sigma + \calN(j)$ \Comment{Updated syndrome}
                    \If{$F^{(j)} \notin \tree$} \label{lst:line:pruning}
                        \State insert $(F^{(j)},\sigma^{(j)},C^{(j)})$ into $\leaves$
                        \State add $F^{(j)}$ to set $\tree$
                    \EndIf
                \EndFor \label{lst:line:end_grow}
            \EndWhile
            \State \textbf{return} failure \Comment{No correction was found}
        \EndProcedure
    \end{algorithmic}
\end{algorithm}

Let us add a few comments to clarify a number of aspects of \alg{decision-tree-decoder}:
\begin{itemize}
    \item This `main' algorithm applies for all decision tree decoders that we consider in this work.
    The difference between different versions of the algorithm is captured entirely within the inlined \texttt{Explore} subroutine, which controls how the decision tree is explored by assigning costs to different partial corrections. 
    Different costs produce very different DTD algorithms. 
    We will generally name specific DTD algorithms after their exploration subroutines.
    
    \item No significant part of $\decisionTree$ is ever fully constructed and stored in \alg{decision-tree-decoder}, instead we explore small parts of it and store (in $\tree$) relevant information about what has been explored, and store (in $\leaves$) information about what could be considered for further exploration. 

    \item Since the correction $F$ (and the corrections of all descendants) found by the exploration of $\decisionTree$ does not depend on the order faults were added to $F$ by the algorithm, we just store $F$ in $\tree$. 
    Fault sets which are considered for exploration are checked against those already seen to avoid redundancy.
    This ensures \alg{decision-tree-decoder} never explores a node of $\decisionTree$ if an equivalent node has already been explored.
    
    \item We store `live nodes' which may be further explored in $\leaves$.
    Each element $(F, \sigma, C)$ consists of the partial correction $F$, the updates syndrome $\sigma$ and a cost $C$.
    We always explore the cheapest element, so when implementing this algorithm in practice it makes sense to store elements in $\leaves$ according to their cost (for example as a heap).
    
    \item Many objects, including partial corrections $F$ and updated syndromes $\sigma$ are expected to be sparse bit strings (or equivalently small sets), which makes sense to take advantage of in implementations. 
\end{itemize}

To compare the efficiency of decision tree decoders resulting from different exploration subroutines in \alg{decision-tree-decoder}, we define the \emph{number of explored nodes} $\nu$ as the total number of nodes the exploration subroutine is called on before finding a correction.

In the following subsections (\sec{breadth-first} and \sec{oracle}) we consider the tree-exploration approaches generated by some simple but illuminating cost functions, which illustrates how very different behavior of decision-tree decoding algorithms can arise.

\subsection{Exploration using breadth-first search}
\label{sec:breadth-first}

Here we consider the first of our two warm-up examples of DTD algorithms (which are illustrative but impractical).
A minimum-weight correction corresponds to the shortest path from the root to a leaf in $\decisionTree$, or more generally, the path with the lowest total weight when weights are non-uniform.  
A standard method for finding such a path is weighted breadth-first search, making it a natural approach for minimum-weight decoding.

When the \texttt{Explore} \sub{bf-cost} is used in \alg{decision-tree-decoder}, we call the resulting decoder the \emph{Breadth-first DTD}.
In the \texttt{Explore} \sub{bf-cost}, assigning the cost change from parent to child as the weight of the selected fault vertex naturally results in a weighted breadth-first exploration of the decision tree, yielding a minimum-weight correction (see \lem{breadth-first-exploration}).

\begin{subroutine}
    \caption{(Weighted) breadth-first exploration}
    \label{alg:bf-cost}
    \begin{algorithmic}[1]
        \State \textbf{pick} $i \in \sigma$ 
        \For{$j  \in  \calN(i) \! \setminus \! F = \{j_1, j_2, \dots j_b \}$}
            \State $C^{(j)} \gets C+w_j$
        \EndFor
    \end{algorithmic}
\end{subroutine}

\begin{lemma}[Breadth-first DTD: min-weight]
\label{lem:breadth-first-exploration}
When using exploration \sub{bf-cost}, the DTD \alg{decision-tree-decoder} returns a minimum-weight correction.
\end{lemma}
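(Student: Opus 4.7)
The plan is to view \alg{decision-tree-decoder} with \sub{bf-cost} as a Dijkstra-style shortest-path search on the DAG whose vertices are subsets $F \subseteq \{1,\dots,N\}$, with a directed edge of weight $w_j$ from $F$ to $F \cup \{j\}$ for each $j \notin F$, and whose target set consists of all $F$ with $HF = \sigma_\text{in}$. The proof then reduces to three claims: a cost-weight invariant, reachability of every valid correction in the decision tree the algorithm traverses, and Dijkstra-style optimality of the first solution extracted.

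First I would establish, by induction on algorithm steps, that every live node $(F,\sigma,C) \in \leaves$ satisfies $C = w(F)$ and $\sigma = \sigma_\text{in} + HF$. Both conditions hold at initialization since $F = \emptyset$, $\sigma = \sigma_\text{in}$, $C = 0$, and they are preserved by the update rule in the exploration loop: $F^{(j)} = F \cup \{j\}$, $\sigma^{(j)} = \sigma + \calN(j) = \sigma_\text{in} + HF^{(j)}$, and $C^{(j)} = C + w_j = w(F^{(j)})$. In particular, whenever the algorithm returns a fault set $F$, the extracted $\sigma$ is empty, so $HF = \sigma_\text{in}$ and any returned $F$ is a valid correction whose weight equals the extracted cost $C$.

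Next I would argue reachability: for any valid correction $F^\star$ there exists a nested chain $\emptyset = F_0 \subsetneq F_1 \subsetneq \cdots \subsetneq F_t = F^\star$ such that each $F_s$ is a legitimate child of $F_{s-1}$ in the tree the algorithm explores, \emph{regardless of which pivot check $i$ is chosen} at each step. The key point is that $H(F^\star + F_{s-1}) = \sigma_\text{in} + HF_{s-1}$, the updated syndrome at $F_{s-1}$, so any pivot $i$ extracted from that syndrome has at least one neighbor $j \in F^\star \setminus F_{s-1}$; selecting such $j$ yields $F_s = F_{s-1} \cup \{j\}$. Combined with an induction showing that at every step every prefix $F_s$ of this chain either sits in $\leaves$ or has already been extracted (since extracting an internal node of the chain adds the next prefix to $\leaves$, possibly via pruning against an equivalent copy in $\tree$), this establishes that a valid correction is produced before $\leaves$ empties.

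Finally, for optimality I would invoke the standard Dijkstra argument. Since $w_j > 0$, each child's cost strictly exceeds its parent's; hence at the moment $(F,\sigma,C)$ is extracted from the min-priority queue $\leaves$, every valid correction $F^\star$ not yet returned has cost $w(F^\star) \geq C$, because by reachability some prefix of $F^\star$ is currently in $\leaves$ (with cost $\geq C$ by the min-extraction property) and extending it to $F^\star$ only adds positive weights. Thus the first extracted node with empty syndrome has minimum weight. The main obstacle is the interplay between pruning and reachability: one must check that skipping a previously seen $F^{(j)}$ cannot cut off a minimum-weight correction, which holds because the cost $w(F^{(j)})$ and the sub-tree beneath $F^{(j)}$ depend only on the set, not on the path by which it was reached, so the copy already retained in $\tree$ inherits every minimum-weight continuation originally passing through the skipped instance.
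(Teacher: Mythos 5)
Your proof is correct and takes essentially the same approach as the paper's: the paper's two-sentence argument is exactly the Dijkstra-style observation that nodes are explored in order of increasing weight and every (minimum-weight) correction is reachable in the decision tree, so the first solution extracted is minimal. You simply make explicit the supporting facts the paper asserts without detail — the invariant $C = w(F)$, reachability of any valid correction under arbitrary pivot choices, and robustness to the set-based pruning — all of which check out.
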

\begin{proof}
The algorithm explores all nodes of the decision tree in order of increasing weight until it finds a correction.
All minimum-weight corrections are contained in the decision tree, and as such the first correction that is found will be of minimum weight and is output by the decoder.
\end{proof}

\subsection{Exploration using syndrome height (from an oracle)}
\label{sec:oracle}

Here we consider the second of our two warm-up examples of DTD algorithms (which are illustrative but impractical).
The breadth-first exploration discussed in \sec{breadth-first} is guaranteed to find a minimum-weight correction but is expected to be very slow as it will need to look at all lower-weight sets of faults first before finding the solution. 
Here we consider a different cost function, namely the syndrome height $\synht(\sigma)$ (which is the weight of a minimum-weight correction for the syndrome $\sigma$; see \sec{basic-defs}).
The syndrome-height cost function results in an exploration that is guaranteed to find a minimum-weight correction for $\sigma$ applying the explore subroutine to $O(h(\sigma))$ corrections in the decision tree.
There is a catch: there is no efficient way to compute the syndrome height in general, and so we cannot form a practical decoder from this.
However we find it conceptually useful to understand how the decision-tree decoder would work assuming access to a hypothetical oracle for the syndrome height.

When the \texttt{Explore} \sub{oracle-cost} is used in \alg{decision-tree-decoder}, we call the resulting decoder the \emph{Height-oracle DTD}.
The exploration subroutine based on syndrome height is given in \sub{oracle-cost}.

\begin{subroutine}
    \caption{Height oracle exploration}
    \label{alg:oracle-cost}
    \begin{algorithmic}[1] 
        \State \textbf{pick} $i \in \sigma$
        \For{$j  \in  \calN(i)\! \setminus \! F = \{j_1, j_2, \dots j_b \} $}
            \State $\sigma^{(j)} \gets \sigma + \calN(j)$
            \State $C^{(j)} \gets \synht(\sigma^{(j)})$ \Comment{From $\texttt{HeightOracle}$}
        \EndFor     
    \end{algorithmic}
\end{subroutine}

\begin{lemma}[Height-oracle DTD: min-weight, min explored nodes]
\label{lem:efficient-oracle-exploration}
When using exploration \sub{oracle-cost}, the DTD \alg{decision-tree-decoder} returns a minimum-weight correction $\hat{F}^*$ after exploring a minimum number of nodes $\nu = |\hat{F}^*|$.
\end{lemma}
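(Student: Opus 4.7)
The plan is to prove both assertions---that $\hat{F}^*$ is minimum-weight and that exactly $\nu = |\hat{F}^*|$ nodes are explored---by tracing \alg{decision-tree-decoder} step by step, using a simple monotonicity property of the syndrome height. The first thing I would establish is the following sub-claim: for any non-trivial syndrome $\sigma$ and any check vertex $i \in \sigma$, one has $h(\sigma + \calN(j)) \geq h(\sigma) - 1$ for every $j \in \calN(i)$, with equality attained for at least one choice of $j$. The lower bound is immediate since, given any minimum-weight correction $G$ of $\sigma + \calN(j)$, the set $G \cup \{j\}$ is a correction of $\sigma$ of weight at most $|G| + 1$. For the matching equality, pick any minimum-weight correction $F^*$ of $\sigma$; because $|F^* \cap \calN(i)|$ must be odd in order to cancel the check at $i$, there exists $j^* \in F^* \cap \calN(i)$, and then $F^* \setminus \{j^*\}$ is a correction of $\sigma + \calN(j^*)$ of weight $h(\sigma) - 1$.

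Writing $h := h(\sigma_\text{in})$, I would then show by induction on the iteration index $k$ that the cheapest live node extracted at step $k$ has cost $h - k + 1$ for $2 \leq k \leq h + 1$ (while step $1$ trivially extracts the root of cost $0$). Exploring the root yields children whose minimum cost is exactly $h - 1$ by the sub-claim, establishing the base case. Inductively, assume the claim up through step $k$. All children added to $\leaves$ by the step-$k$ exploration have cost $\geq (h - k + 1) - 1 = h - k$, with equality for at least one of them, whereas any older unexplored live node---a child added during some earlier step $m < k$ whose parent had cost $h - m + 1$ (or $0$ when $m = 1$)---has cost at least $h - m > h - k$. Hence the cheapest live node at step $k + 1$ has cost exactly $h - k$, and the algorithm descends along a chain in $\decisionTree$ of strictly decreasing cost. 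At step $h + 1$ the extracted node has cost $0$, i.e.\ empty syndrome, so the algorithm returns its fault set $F$, whose cardinality equals its depth in $\decisionTree$ and therefore equals $h$. Thus $F$ is a weight-$h(\sigma_\text{in})$ correction, hence minimum-weight, and $\texttt{Explore}$ has been called exactly $\nu = h = |\hat{F}^*|$ times.

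For the matching lower bound on $\nu$---i.e.\ that no DTD instance can return a correction after fewer explorations---I would observe that in any run of \alg{decision-tree-decoder} the returned fault set $F$ sits at depth $|F|$ in $\decisionTree$, and reaching it requires its parent, grandparent, and so on all the way back to the root to have been explored, giving $\nu \geq |F| \geq h(\sigma_\text{in}) = |\hat{F}^*|$ whenever the decoder returns a valid correction. So $\nu = |\hat{F}^*|$ is a universal lower bound attained here. The main (and essentially only) point requiring care is ruling out the possibility that a previously created but still-live node could undercut the descending chain and spoil the inductive bookkeeping---this is exactly what the two-sided form of the sub-claim is designed to prevent, so once it is in place the remainder is routine.
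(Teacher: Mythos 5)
Your proposal is correct and follows essentially the same route as the paper's proof: the central ingredient in both is the two-sided monotonicity property $h(\sigma+\calN(j)) \geq h(\sigma)-1$ with equality exactly when $j$ lies in a minimum-weight correction of $\sigma$, from which the greedy descent along maximally decrementing children yields a minimum-weight correction in $h(\sigma_\text{in})$ exploration calls. You simply execute the argument in more detail than the paper does—spelling out the induction showing that no stale live node can undercut the descending chain, and adding the explicit ancestor-counting lower bound $\nu \geq |F| \geq h(\sigma_\text{in})$ that justifies the word ``minimum'' in the lemma, both of which the paper leaves implicit.
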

\begin{proof}
Consider a tree node with syndrome $\sigma'$ and a child node with syndrome $\sigma'' = \sigma' + \calN(j)$ for some fault node $j$. 
Then $\synht(\sigma'') \geq \synht(\sigma') - w_j$ with equality attained iff $j$ is in a minimum-weight correction for $\sigma'$.  
Always going down a branch that maximally decrements the cost thus directly leads to a minimum-weight correction $\hat{F}^*$ for $\sigma$ in $\card{\hat{F}^*}$ calls of the explore subroutine.
\end{proof}

This height-oracle exploration is illustrated in \fig{intro-figure} in \sec{intro}.
While we know of no efficient way to compute the syndrome height (we believe there is none), using syndrome height as a cost measure inspires the other more practical algorithms that we introduce in \sec{provable-cost} and \sec{fast-cost}, where the syndrome height is efficiently bounded or estimated.

\clearpage

\section{Height-bound decision-tree decoder}
\label{sec:provable-cost}

In this section, we introduce a decoder that explores the decision tree using a cost function based on lower bounds of the syndrome height, and which is guaranteed to find a minimum-weight correction.
The catch is that this decoder is not guaranteed to terminate quickly for all error patterns (it works best when syndrome-height lower bounds are relatively tight).
The intuition behind this approach is that lower bounds of the syndrome height allow large branches of the decision tree to be avoided knowing that they cannot lead to a lower weight error than that which is ultimately found.

We introduce the basic exploration subroutine in \sec{exploration-height-bounds}, and prove it provides a minimum-weight correction.
We then introduce a refined version of the exploration in \sec{cost-refinement} which does not affect the minimum-weight guarantee, but which uses BP to help find a solution more quickly.
This is the exploration version we use in our \emph{height-bound} DTD.
We characterize the runtime of the height-bound DTD numerically in \sec{min-weight-numerics}, observing that the median-case runtime scales linearly with the weight of the error for a range of 2D color codes and bivariate bicycle codes.
There are also two notable appendices associated with the material in this section: 
In \app{maxsat-comparison} we compare the runtime of height-bound DTD with published data for MaxSAT decoders, which are also min-weight.
In \app{find-min-weight-logicals} we provide an algorithm that uses the decision tree and the height bound to compute all min-weight logical operators of a code.

\subsection{Exploration using syndrome-height lower bounds}
\label{sec:exploration-height-bounds}
Recall that the decision tree exploration in \sec{oracle} based on syndrome height finds a minimum-weight correction in linear time, but that we know of no efficient way of computing the syndrome height.
In that algorithm, the syndrome height is useful because it can be used to identify the weight of the minimum-weight correction contained in the descendants of any node in the decision tree, eliminating all but the optimal node for exploration during each iteration.
Later in \sec{finding-lower-bounds} we will see that while we cannot efficiently compute the syndrome height, we have lots of techniques to find lower bounds.
While this does not eliminate all but the optimal node during each round, it can eliminate many nodes.

To see how this works, suppose that we have reached a node labeled by the fault set $F$ with updated syndrome $\sigma$, and suppose that $\synht_\text{min}(\sigma)$ is a lower bound of $\synht(\sigma)$.
Let the cost of node $F$ be
\begin{equation}
\label{eq:minweight-cost-def}
    C = w(F) + h_\text{min}(\sigma).
\end{equation}
Note that $C \leq w(F) + \synht(\sigma)$, i.e., $C$ lower bounds the weight of any correction that could be reached by exploring node $F$ and its descendants. 
As such, if we find a correction by exploring some \emph{other} node with cost less than or equal to $C$, then we know that we will not be able to find anything better by exploring node $F$.

\begin{subroutine}[ht]
    \caption{Height-bound exploration (unrefined version)}
    \label{alg:minweight-cost}
    \begin{algorithmic}[1] 
        \State \textbf{pick} $i \in \sigma$
        \For{$j  \in  \calN(i) \! \setminus \! F = \{j_1, j_2, \dots j_b \}$}
            \State $F^{(j)} \gets F \cup \{j\}$
            \State $\sigma^{(j)} \gets \sigma + \calN(j)$
            \State $h_\text{min} \gets \texttt{height\_lower\_bound}(\sigma^{(j)})$
            \State $C^{(j)} \gets \max \left[h_\text{min} + w(F^{(j)}) ,C\right]$  \Comment{Use bound from $h_\text{min}$ unless bound from parent stronger}
        \EndFor
    \end{algorithmic}
\end{subroutine}

In \sub{minweight-cost} we show the exploration subroutine that inserts the cost of \eq{minweight-cost-def} into the DTD decoder \alg{decision-tree-decoder} by calling a function $\texttt{height\_lower\_bound}$.
We call this the (unrefined version) of the height-bound exploration to distinguish it from that which we use for our main height-bound DTD which is provided in \sec{cost-refinement}.
When the \texttt{Explore} \sub{minweight-cost} is used in \alg{decision-tree-decoder}, we call the resulting decoder the \emph{(unrefined) height-bound DTD}.

\begin{lemma}[Height-bound DTD: min-weight]
\label{lem:height-bound-lemma}
When using exploration \sub{minweight-cost}, the DTD \alg{decision-tree-decoder} returns a minimum-weight correction.
Similarly, using exploration \sub{minweight-enhanced-cost}, the DTD \alg{decision-tree-decoder} returns a minimum-weight correction.
\end{lemma}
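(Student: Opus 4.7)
The plan is to establish the lemma via a best-first search correctness argument: show that the cost assigned to every live node is a lower bound on the weight of any correction that extends it, and then use the fact that the algorithm always pops the cheapest live node.

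First I would prove the key invariant: for every node $(F,\sigma,C)$ that appears in $\leaves$, and for every correction $F^\star \supseteq F$ of the input syndrome $\sigma_\text{in}$, one has $C \leq w(F^\star)$. The proof is by induction on the depth of $F$ in the explored portion of $\decisionTree$. At the root, $F = \emptyset$, $\sigma = \sigma_\text{in}$, and $C = 0$, so the statement is trivial. For the inductive step, a child $F^{(j)} = F \cup \{j\}$ with updated syndrome $\sigma^{(j)}$ gets cost $C^{(j)} = \max\bigl[h_\text{min}(\sigma^{(j)}) + w(F^{(j)}),\,C\bigr]$. The first term is bounded above by $w(F^{(j)}) + h(\sigma^{(j)})$ since $h_\text{min}$ is assumed to be a valid lower bound on the true syndrome height $h$, and $h(\sigma^{(j)}) \leq w(F^\star \setminus F^{(j)})$ because $F^\star \setminus F^{(j)}$ is one valid correction of $\sigma^{(j)}$; summing gives the first term $\leq w(F^\star)$. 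The second term is bounded by $w(F^\star)$ by the inductive hypothesis applied to the parent. Taking the max preserves the bound.

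Next I would apply the invariant to the moment the algorithm halts and outputs some $\hat F$ with $\sigma = \emptyset$. Since $h_\text{min}(\emptyset) = 0$, the cost recursion forces $C(\hat F) \geq w(\hat F)$. Suppose for contradiction a correction $F^\star$ exists with $w(F^\star) < w(\hat F)$. Every correction corresponds to a node in $\decisionTree$ (by the construction recalled in \sec{decision-tree-object}), so trace the path from the root toward $F^\star$. The node $F^\star$ itself cannot have been popped already or else it would have been returned. Therefore some ancestor $F_\text{anc}$ of $F^\star$ (possibly $F^\star$ itself) lies in $\leaves$ at the moment $\hat F$ is extracted. By the invariant, its cost satisfies $C(F_\text{anc}) \leq w(F^\star)$. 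Since $\hat F$ was chosen as the cheapest live node, $C(\hat F) \leq C(F_\text{anc})$, hence $w(\hat F) \leq C(\hat F) \leq w(F^\star)$, contradicting $w(F^\star) < w(\hat F)$. This handles the unrefined subroutine.

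For the refined version (\sub{minweight-enhanced-cost} in the next subsection), the argument will be identical provided I can show that the refinement still assigns each child a cost that lower-bounds the weight of any correction extending it. I expect the main obstacle to be exactly this verification: if the BP-based refinement modifies how ties are broken or how children are ordered, that is harmless, but if it rescales or tightens $h_\text{min}$ using BP-derived quantities, I must check that these quantities remain valid lower bounds on the true syndrome height (or that the refined $C^{(j)}$ is still bounded above by $w(F^{(j)}) + h(\sigma^{(j)})$ and by the parent's cost through the max). Assuming this bound survives, the two contradiction steps above go through verbatim, establishing the min-weight guarantee for both subroutines.
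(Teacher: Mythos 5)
Your proposal is correct and follows essentially the same route as the paper's proof: both rest on the observation that each node's cost lower-bounds the weight of every correction descending from it, combined with the fact that the algorithm always pops the cheapest live node (the paper instantiates your invariant directly on the live ancestor $\tilde{F}$ of an arbitrary alternative solution, whereas you prove the invariant by induction, which makes the handling of the $\max$ with the parent's cost slightly more explicit). Your conditional worry about the refined subroutine is resolved exactly as you anticipate: \sub{minweight-enhanced-cost} uses the BP output only as a lexicographic tie-breaker on an unchanged $C_\text{bound}$, so the argument carries over unchanged.
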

We prove the case for \sub{minweight-cost} (and later explain why the proof still holds for \sub{minweight-enhanced-cost}).
\begin{proof}
    Consider three nodes (see \fig{wlb_dt}):
    \begin{enumerate}
        \item $(\hat{F}^*, \emptyset, C^*)$: The solution found by \alg{decision-tree-decoder} (obtained in line \ref{lst:line:extract} from $\leaves$).
        \item $(\hat{F}, \emptyset, C)$: Any other arbitrary solution in the decision tree.
        \item $(\tilde{F}, \tilde{\sigma}, \tilde{C})$: The ancestor node of $(\hat{F}, \emptyset, C)$ currently in $\leaves$.
    \end{enumerate}
    We show that the found solution $\hat{F}^*$ has minimum weight, i.e. that $w(\hat{F}^*) \leq w(\hat{F})$, as follows:
    \begin{align}
        w(\hat{F}^*) &= C^* - h_\text{min}(\emptyset) &&  \text{definition of cost} \nonumber \\
             &= C^* && \text{as } 0 \leq h_\text{min}(\emptyset) \leq \synht(\emptyset) = 0 \label{eq:solcostiscorrweight} \\
           &\leq \tilde{C} && \text{definition of algorithm } \nonumber \\
           &\leq w(\tilde{F}) + \synht(\tilde\sigma) && \text{definition of cost}\nonumber \\
           &\leq w(\tilde{F}) + w(\hat{F}-\tilde{F}) && \text{as } H( \hat{F}-\tilde{F}) = \tilde{\sigma} \nonumber\\
           &= w(\hat{F}) && \text{by linearity of weight}\nonumber
    \end{align}
    as required.
\end{proof}

\begin{figure}[ht]
  \centering
    \includegraphics[width=0.5\textwidth]{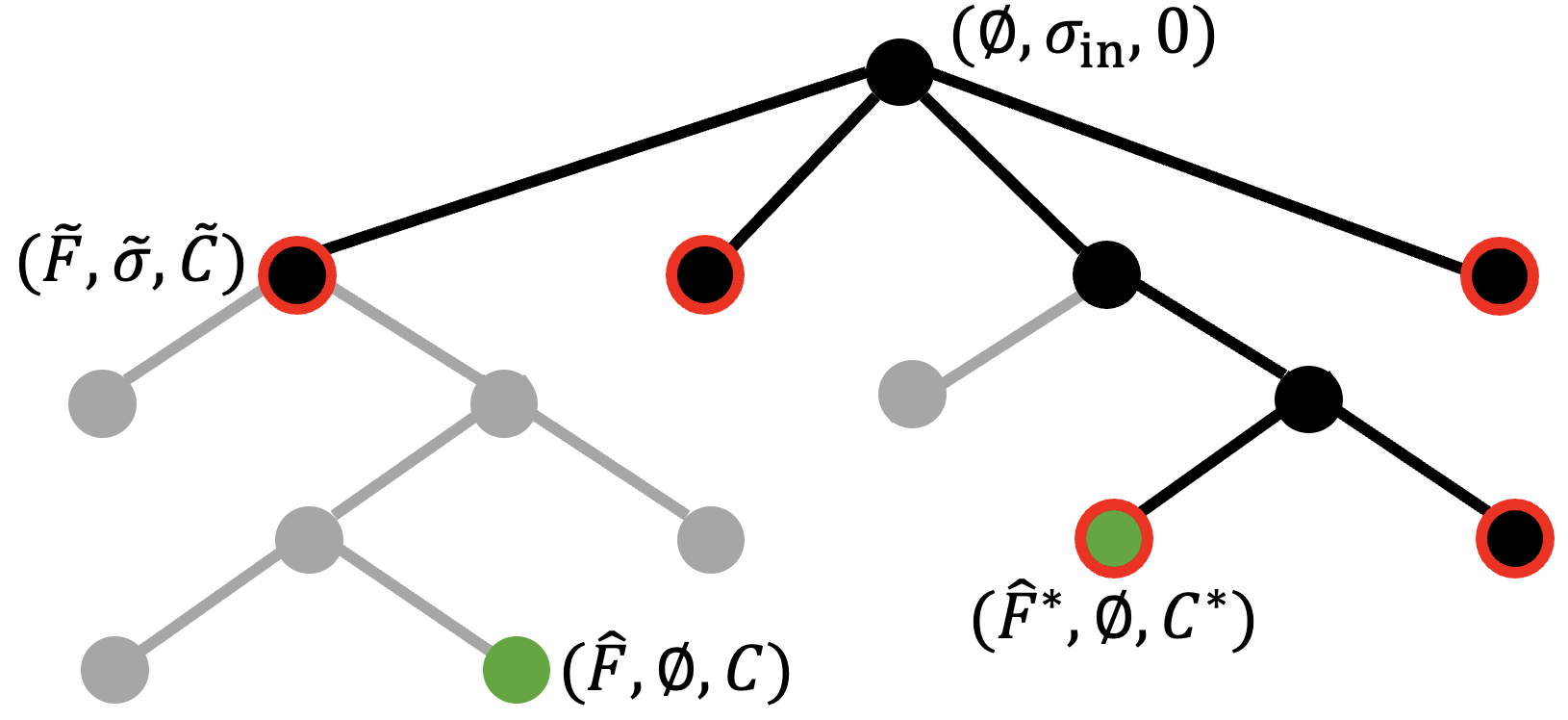}
     \caption{
     Decision tree decoding with the cost function $C = w(F) + h_\text{min}(\sigma)$ yields a minimum weight solution.
     We sketch part of the decision tree $\decisionTree$ just before the the algorithm finds a correction $\hat{F}^*$.
     Nodes filled black have been seen, while nodes filled gray have not, and nodes outlined in red are live and under consideration for exploration. 
     Two nodes corresponding to valid corrections (i.e., have trivial updated syndrome) are filled green, including $\hat{F}^*$ and another node $\hat{F}$. 
     We also label the node $\tilde{F}$, which is the live ancestor of $\hat{F}$.
     The fact that correction $\hat{F}^*$ is found implies that $C^* = w(\hat{F}^*)$ must be the lowest cost for any live node, such that in particular $\tilde{C} \geq C^*$.
     Then $\hat{F}^*$ must be a minimum-weight solution as $w(\hat{F}^*) = C^* \leq \tilde{C} \leq w(\hat{F})$. 
     }
  \label{fig:wlb_dt}
\end{figure}

We make the following remarks:
\begin{itemize}
    \item The weighted breadth-first exploration \sub{bf-cost} is a special case of \sub{minweight-cost} with the trivial lower bound $h_\text{min}(\sigma) = 0 \quad \forall \, \sigma$.

    \item Since in decision-tree decoding we cannot select a fault twice, we could replace $\synht(\sigma)$  with the \emph{decimated syndrome height} $\synht_F(\sigma)$ (height of the syndrome $\sigma$ on the decimated check matrix $H_F$) in \eq{minweight-cost-def} and the proof would still hold. 
    Taking decimation into account would lead to stronger bounds but possibly increase the complexity of finding such bounds. 
    
\end{itemize}

\subsection{Exploration refinement with belief propagation}
\label{sec:cost-refinement}
The cost defined in \eq{minweight-cost-def} is discrete which can lead to ties that make for a rather arbitrary search (especially when uniform weights are used).
For any choice among the ties, the algorithm will eventually find a minimum weight solution, but some choices could do so much faster than others.

To remedy this, we run belief propagation and use its output as a tiebreaker. 
We generalize the cost to a vector $\bm{C}=(C_\text{bound}, C_\text{tie-break}) \in \mathbb{R}^2$ and cost vectors are compared by first comparing their $C_\text{bound}$ components (from the height lower bound), and then comparing $C_\text{tie-break}$ (for which we use BP) only if the former reports equality.
The tie-break cost of the parent node is initialized as $C_\text{tie-break} = 0$, 
and then we run belief propagation, using the estimated LPR output $\Lambda_\text{BP}(j \,| \, \sigma)$ as  $C^{(j)}_\text{tie-break}$ for each child (specified by fault vertex $j$).
This process continues: the tie-breaker cost of the $j$th child of a parent node with fault set $F$ is found by running decimated BP and adding the estimated LPR output $\Lambda_\text{BP}(j \,|\sigma,F)$ to the tie-breaker cost of the parent.
Since the LPR is a large negative value for a fault that BP is confident occurred, lower costs are preferred.

When the resulting \texttt{Explore} \sub{minweight-enhanced-cost} is used in \alg{decision-tree-decoder}, we call the resulting decoder \emph{height-bound DTD}.
Since we only invoke $C_\text{tie-break}$ for tie-breaking cases of $C_\text{bound}$, the proof of \lem{height-bound-lemma} remains unaffected by replacing \sub{minweight-cost} by \sub{minweight-enhanced-cost}.

\begin{subroutine}[ht]
    \caption{Height-bound exploration}
    \label{alg:minweight-enhanced-cost}
    \begin{algorithmic}[1] 
        \State $C_\text{bound}, C_\text{tie-break} \gets \bm{C}$
        \State $\set{\Lambda_\text{BP}(j|\sigma, F)}, \, \hat{F}_\text{BP} \gets \texttt{BPDecimated}(\sigma,F)$ \Comment{Run BP on $\sigma$ with decimation $F$}
        \State \textbf{pick} $i \in \sigma$
        \For{$j  \in  \calN(i) \! \setminus \! F = \{j_1, j_2, \dots j_b \}$}
            \State $F^{(j)} \gets F \cup \{j\}$
            \State $\sigma^{(j)} \gets \sigma + \calN(j)$
            \State $h_\text{min} \gets \texttt{height\_lower\_bound}(\sigma^{(j)})$
            \State $\bm{C}^{(j)} \gets \left(\max \left[h_\text{min} + w(F^{(j)}) ,C_\text{bound}\right],\,  C_\text{tie-break} + \Lambda_\text{BP}(j \,| \, \sigma, F) \right)$
        \EndFor
    \end{algorithmic}
\end{subroutine}

\subsection{Syndrome-height lower bounds from syndrome neighborhoods}
\label{sec:finding-lower-bounds}

We have not yet specified how to find lower bounds on the syndrome height $h(\sigma)$. 
We require bounds to be efficient to evaluate, and ideally they should be somewhat tight for typical syndromes, which can depend on the check matrix $H$ of the code. 
We assume uniform fault weights in this section.

In this section, we present bounds that can be applied to any qLDPC code based on the requirement that every syndrome vertex must have at least one fault in its neighborhood.
The resulting `syndrome-neighborhood bounds' are expected to be weak for topological codes, due to the existence of string operators with syndromes only at their endpoints. 
However, syndrome-neighborhood bounds ought to be tighter for codes with expansion, where large errors produce large syndromes. 
Nevertheless, as shown in \sec{min-weight-numerics}, height-bound DTD using syndrome-neighborhood bounds achieves excellent median-case runtime for 2D color codes, a topological code family (and we find even better performance for bivariate bicycle codes).

\textbf{Loose bound from fault degree: }
Suppose $G$ has fault degree $c$ (which is the maximum number of check vertices touching any individual fault vertex).
Given a syndrome $\sigma$, a single fault can be responsible for a maximum number of $c$ check vertices in $\sigma$ and so a height lower bound is \begin{eqnarray}
\label{eq:degree-bound}
\synht(\sigma) \geq h_1(\sigma) =  \Bigl\lceil \frac{\card{\sigma}}{c} \Bigr\rceil.
\end{eqnarray}

\textbf{Tighter bound from syndrome structure: }
Clearly the bound in \eq{degree-bound} can be strengthened in certain cases. 
For example, consider a scenario where no two check vertices in $\sigma$ neighbor the same fault vertex (which would imply $\synht(\sigma) \geq |\sigma|$).
To find a tighter bound, we need a more detailed analysis.
The following bound arises from the requirement that there must be at least one fault in the neighborhood of every syndrome vertex, making refinements based on the structure of overlaps of those neighborhoods. 

\begin{lemma}
Consider a syndrome $\sigma$ caused by an (unknown) correction $F$.
Let $c$ be the max column weight of $H$ (equivalently, the maximum number of check nodes adjacent to any fault node in $G$).
Let $B_l$ be the set of all fault vertices which touch precisely $l$ vertices in $\sigma$ for $l=1,2,\dots, c$.
(Note that $B_l$ depends on $\sigma$ and the decoding graph $G$ alone, not on $F$.)
Let the \emph{sensitivity} $\text{sen}(i)$ of a check vertex $i \in \sigma$ be the largest value $l$ such that $i$ is adjacent to an element of $B_l$.
Let $A_l$ be the set of vertices in $\sigma$ with sensitivity $l$ and $a_l = \card{A_l}$. Lastly, define $q_l$ recursively: 
$$q_c = 0, \qquad q_{l} = (q_{l+1} + a_{l+1}) \bmod (l+1) \text{ for } l={1,2, \dots c-1}. $$
The syndrome height $h(\sigma)\geq h_2(\sigma)$, where:
\begin{eqnarray}
\label{eq:tighter-bound}
h_2(\sigma) = 
\ifrac{a_c}{c} + \ifrac{q_{c-1} + a_{c-1}}{c-1} + \ifrac{q_{c-2}+ a_{c-2}}{c-2} + \dots + \ifrac{q_1 + a_1}{1} = \sum_{l=1}^c \ifrac{q_l + a_l}{l}.
\end{eqnarray}
\end{lemma}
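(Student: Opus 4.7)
The plan is to show that any correction $F$ of $\sigma$ satisfies $|F| \geq h_2(\sigma)$, which immediately gives $h(\sigma) \geq h_2(\sigma)$ by the definition of $h$. The first ingredient I would record is the structural consequence of the sensitivity definition: since $\text{sen}(i) = l$ for $i \in A_l$, the check $i$ cannot neighbor any fault in $B_{l+1} \cup \ldots \cup B_c$ (otherwise its sensitivity would exceed $l$). Equivalently, every fault $j \in B_{l'}$ has all $l'$ of its syndrome-neighbors inside $A_{l'} \cup A_{l'+1} \cup \ldots \cup A_c$. This is the essential refinement over the plain fault-degree argument that yields the weaker bound $h_1$.

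Using this observation, I would derive a family of covering inequalities by double counting. Write $F_{l'} := |F \cap B_{l'}|$ and $A_k := a_1 + \cdots + a_k$. For each $k$, every check in $A_1 \cup \cdots \cup A_k$ must be covered by $F$, and only faults in $F \cap (B_1 \cup \cdots \cup B_k)$ can do so. Since a fault in $B_{l'}$ contributes at most $l'$ incidences to $\sigma$, this yields
$$\sum_{l'=1}^k l'\, F_{l'} \;\geq\; A_k, \qquad k = 1, 2, \ldots, c.$$
It then suffices to show that any non-negative integer solution $(F_1, \ldots, F_c)$ of this system has $\sum_l F_l \geq h_2(\sigma)$.

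For this integer-programming lower bound I would induct on $c$, using the recursion $h_2(a_1, \ldots, a_c) = \lfloor a_c/c \rfloor + h_2(a_1, \ldots, a_{c-2}, a_{c-1} + (a_c \bmod c))$, which follows directly from the definition of the $q_l$'s. The base case $c = 1$ is the trivial bound $F_1 \geq a_1 = h_2$. For the inductive step, I would construct from any feasible $(F_1, \ldots, F_c)$ a feasible $(F'_1, \ldots, F'_{c-1})$ for the $(c-1)$-level problem with modified demand $a'_{c-1} = a_{c-1} + (a_c \bmod c)$ satisfying $\sum_l F'_l \leq \sum_l F_l - \lfloor a_c/c \rfloor$, then invoke the inductive hypothesis. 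In parallel, feasibility of the canonical top-down allocation $F_l^{\text{td}} := \lfloor (q_l + a_l)/l \rfloor$ is checked by a telescoping computation $\sum_{l \leq k} l\, F_l^{\text{td}} = A_k + q_k \geq A_k$, showing that $h_2$ is achieved by an explicit feasible solution.

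The main obstacle will be the case $F_c < \lfloor a_c/c \rfloor$ in the reduction. When $F_c \geq \lfloor a_c/c \rfloor$ the construction is clean: delete $\lfloor a_c/c \rfloor$ level-$c$ faults and reinterpret any remaining level-$c$ faults as level-$(c-1)$ faults, with feasibility of the reduced constraint at $k = c-1$ following from $C_{c-1}$ or from $C_c$ depending on the residue $a_c \bmod c$. But when $F_c$ is small, naively deleting all $F_c$ level-$c$ faults yields only $\sum F'_l = \sum F_l - F_c$, which is too weak. Closing this gap requires exploiting the slack that $C_c$ forces in the reduced problem: $\sum_{l \leq c-1} l\, F_l \geq A_c - c F_c$ exceeds the reduced demand $A'_{c-1} = A_c - c \lfloor a_c/c \rfloor$ by $c(\lfloor a_c/c \rfloor - F_c)$, which must be converted into an integer deletion of lower-level faults to offset the $\lfloor a_c/c \rfloor - F_c$ shortfall. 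A cleaner alternative I would consider is to characterize the integer optimum directly as the bottom-up greedy allocation $F_k = \max(0, \lceil (a_k - s_{k-1})/k \rceil)$ (with running slack $s_k$) and match its value to $h_2$ by paralleling the $q$-recursion, bypassing the awkward case split entirely.
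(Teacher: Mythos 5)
Your setup is correct and is essentially a careful formalization of the paper's own argument rather than a different route. The structural facts you record --- a check of sensitivity $l$ has all of its fault-neighbours in $B_1\cup\dots\cup B_l$, so the checks in $A_1\cup\dots\cup A_k$ can only be covered by faults of $F$ lying in $B_1\cup\dots\cup B_k$, each contributing at most $l'$ incidences --- are exactly the content of the paper's ``each fault vertex in $B_l$ can explain at most $l$ check vertices'' step, and your covering system $\sum_{l'\le k} l' F_{l'} \ge a_1+\dots+a_k$ is a valid consequence (every check in $\sigma$ neighbours an odd, hence nonzero, number of faults of $F$). Your telescoping identity $\sum_{l\le k} l\,\lfloor (q_l+a_l)/l\rfloor = A_k+q_k$ is also correct, and usefully shows that $h_2$ is precisely the value of the canonical top-down allocation, i.e.\ the best bound this relaxation could possibly certify. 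The lemma therefore reduces, as you say, to showing that no nonnegative integer point of the staircase system has $\sum_l F_l < h_2$. Note that LP duality alone cannot finish this: the optimal dual certifies only $\sum_l F_l \ge \sum_l a_l/l$, which is the strictly weaker bound $h_3$ of the paper's appendix, so the integer (floor-by-floor) argument is genuinely needed.

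The genuine gap is the one you flag yourself: the inductive step in the case $F_c < \lfloor a_c/c\rfloor$ is not carried out, and neither of your two proposed repairs is executed. This is where the claim could conceivably fail --- a solution might use fewer level-$c$ faults than the greedy and hope that the compensating lower-level faults are partially absorbed by favourable rounding at levels below --- so it is not a step one may wave away. It is closable: the slack $c(\lfloor a_c/c\rfloor - F_c)$ that $C_c$ forces into the levels below $c$ propagates downward faster than the cascading deletions consume it (at level $k$ one deletes at most $k<c$ units of coverage per deleted fault), or alternatively one can prove a Lipschitz/monotonicity property of $h_2$ (adding $\delta\le l$ to $a_l$ increases $h_2$ by at most one, and demoting demand from level $c$ to level $c-1$ cannot decrease $h_2$) and run the exchange argument. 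For what it is worth, the paper's own proof is an informal version of the same greedy accounting and silently assumes exactly this optimality of the level-by-level allocation, so you have correctly isolated the one step that actually requires an argument; but as submitted your proof is incomplete at that step.
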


\begin{proof} 
To derive this bound, note any valid correction must include at least one fault next to each check vertex in the syndrome. 
We seek the minimum number of fault vertices required to satisfy this condition for all check vertices in the syndrome.

We proceed by considering sets of fault vertices $B_l$ and check vertices $A_l$ with decreasing sensitivity levels $l$. 
Starting from $B_c$ and $A_c$, each fault vertex in $B_c$ can explain (i.e., be adjacent to) at most $c$ check vertices in $A_c$. 
Thus, selecting $\ifrac{a_c}{c}$ fault vertices from $B_c$ would be sufficient to neutralize (in the best-case scenario) all check vertices in $A_c$, except for a remainder of $a_c \bmod c$ checks that are left un-neutralized.

Next, we move to $B_{c-1}$ and $A_{c-1}$. 
Each fault vertex in $B_{c-1}$ can explain at most $c-1$ check vertices, but only those in $A_{c-1}$ and any remaining un-neutralized checks from $A_c$. 
Therefore, the total number of checks to be neutralized at this stage is $q_{c-1} + a_{c-1}$, where $q_{c-1} = a_c \bmod c$. 
Selecting $\ifrac{q_{c-1} + a_{c-1}}{c-1}$ fault vertices from $B_{c-1}$ ensures that all remaining checks in $A_c$ and $A_{c-1}$ are neutralized, except for a new remainder $q_{c-2} = (q_{c-1} + a_{c-1}) \bmod (c-1)$.

This process repeats recursively. 
At each step $l$, the remainder $q_l = (q_{l+1} + a_{l+1}) \bmod (l+1)$ represents the number of checks left un-neutralized after accounting for faults in the previous step. 
To neutralize the checks at step $l$, we select $\ifrac{q_l + a_l}{l}$ fault vertices from $B_l$.
Summing over all steps from $l=1$ to $c$, provides the given lower bound on the syndrome height $h(\sigma)$.
\end{proof}

\begin{algorithm}
    \caption{Height-bound computation}
    \label{alg:height-bound-comp}
    \begin{algorithmic}[1] 
        \Procedure{HeightBound}{$\sigma$}
            \State $a_l \gets 0 \quad \forall l=1, \dots c$ \Comment{Initialize $a_l$ vector}
            \For{$i \in \sigma$} \Comment{Populate $a_l$ vector}
                \State $\text{sen}(i) = 1$
                \For{$j \in \calN(i)$} \Comment{Find sensitivity $\text{sen}(i)$ for all checks}
                    \State $\text{sen}(i) \gets \max(\text{sen}(i), \card{\calN(j) \cap \sigma})$
                \EndFor
                \State $a_{\text{sen}(i)} \gets a_{\text{sen}(i)} + 1$
            \EndFor
            \State $q \gets 0$ \Comment{Initialize remainder}
            \State $h_2 \gets 0$ \Comment{Initialize bound}
            \State $l \gets c$
            \While{$l > 0$} \Comment{Find bound}
                \State $h_2 \gets h_2 \ifrac{q + a_l}{l}$ \Comment{Update bound}
                \State $q \gets (q+ a_l) \bmod l$ \Comment{Update remainder}
                \State $l \gets l-1$
            \EndWhile
            \State \textbf{return} $h_2$ 
        \EndProcedure
    \end{algorithmic}
\end{algorithm}

We provide \alg{height-bound-comp} that computes the syndrome-height lower bound in \eq{tighter-bound}.
Let us briefly consider the time complexity of this computation for any syndrome $\sigma$ for a code code family.
Let constants $r$ and $c$ be the max row and column weights of any matrix $H$ in the code family.
Finding the number of syndrome vertices a fault vertex touches is $O(c)$, while doing this for all faults around a specific check is then $O(r c)$. 
Since this is repeated for all checks in the syndrome, we end up with a time complexity of $O(\card{\sigma} r c)$ for the calculation of all of the $a_l$s. 
The calculation of the bound from that vector is just an additional $O(c)$, such that the total time complexity of \alg{height-bound-comp} is $O(\card{\sigma} r c)$.
For better time complexity, one could use a different algorithm that reuses partial height-bound calculations from parents in the decision tree for child nodes, since their syndromes are only slightly updated.

\textbf{Bounds from syndrome subsets: }
The bound in \eq{tighter-bound} still holds if we evaluate it for a subset of the syndrome $\sigma' \subset \sigma$, such that $\synht(\sigma) \geq h_2(\sigma')$.
It holds because the bound in \eq{tighter-bound} arises from the requirement that there must be at least one fault in the neighborhood of every syndrome vertex, and removing vertices from the syndrome cannot increase the number of faults required to satisfy this constraint.
One important case arises when $G$ has $k$-colorable check vertices (which is the case for the color codes and bivariate-bicycle codes as seen in the next section).
Then taking $\sigma'$ as the syndrome checks of one color, the fault degree is at most 1 such that $c_1 = \card{\sigma'}$, and \eq{tighter-bound} reduces to $h(\sigma) \geq \synht_2(\sigma') = |\sigma'|$. 
See \fig{bound-example}, where we provide an example on the 2D color code.

Another important case is where $\sigma$ separates into disjoint clusters such as $\sigma = \sigma_A  \sqcup \sigma_B$ where $\calN(\sigma_A)$ and $\calN(\sigma_B)$ are disjoint.
In this case, the bounds we obtain for each cluster can be added together to form a bound on $h(\sigma) \geq h_2(\sigma_A)+h_2(\sigma_B)$.
This generalizes straightforwardly to more than two clusters. 

A final general comment is that in some cases various height lower bounds may apply, in which case we can always use whichever happens to be the tightest for a given syndrome.

\begin{figure}[ht]
  \centering
    (a)\includegraphics[width=0.25\textwidth]{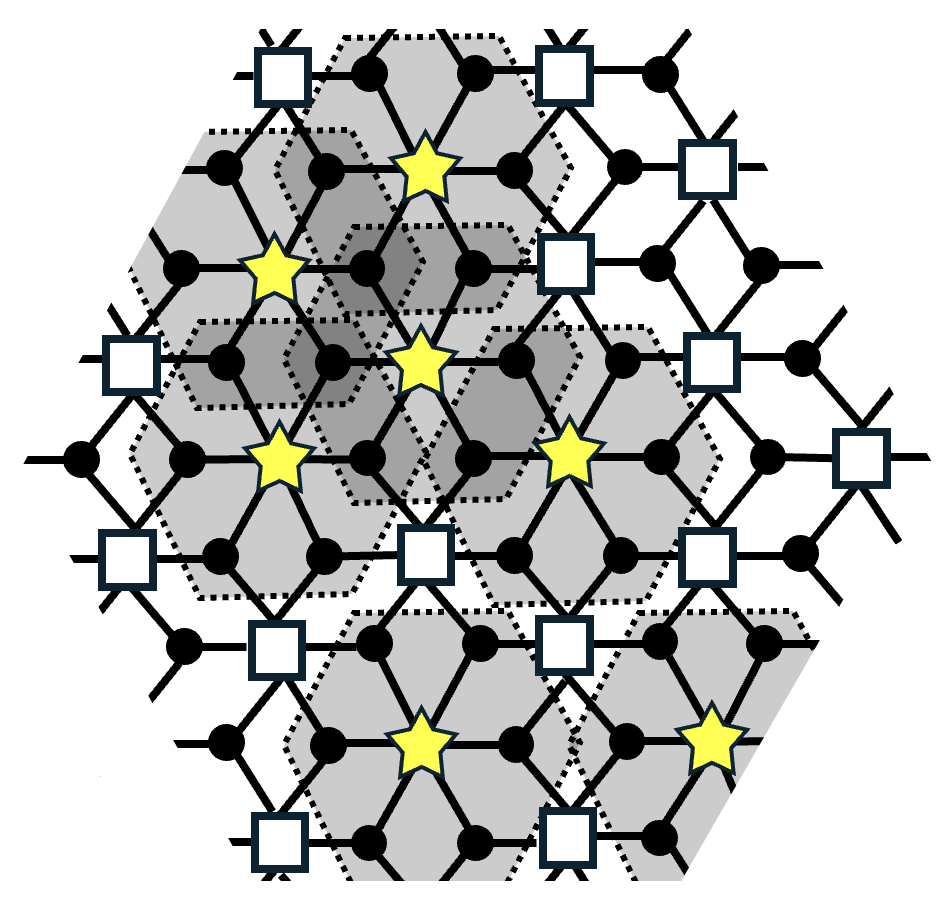}
    (b)\includegraphics[width=0.25\textwidth]{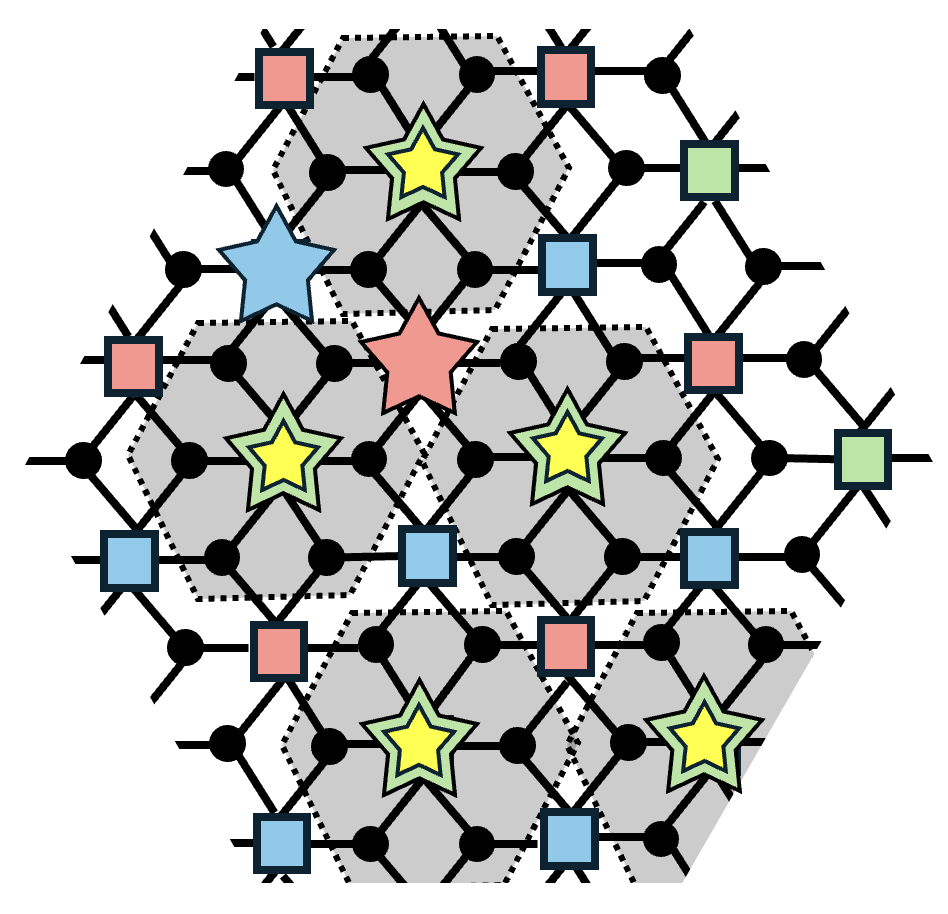}
     \caption{
     Our syndrome height bounds are based on the fact that there must be at least one fault in the neighborhood (gray) of each syndrome node.
     (a) The naive bound $h_1(\sigma) = \lceil 7/3 \rceil =3$ does not take structure of the syndrome neighborhoods into account, whereas the more refined bound $h_2(\sigma) = 4$ does and can be tighter. 
     (b) Using the green-colored syndrome subset $\sigma'$, provides an even tighter bound  in this case $h_2(\sigma') = |\sigma'| = 5$.
     }
  \label{fig:bound-example}
\end{figure}

We provide some other height-lower bounds and prove relations between them in \app{bounds-relations}.

\subsection{Numerical results for color codes and bivariate bicycle codes with perfect measurements}
\label{sec:min-weight-numerics}

Here, we numerically evaluate the runtime of height-bound DTD (defined in \sec{cost-refinement}), measured by the number of explored decision tree nodes.
We use two CSS code families: the standard triangular color codes~\cite{bombin2006topological,Bombin2007} (see \fig{example-codes}(a) in \sec{decoding-scenarios}) and a set of bivariate bicycle codes\footnote{The bivariate bicycle codes are taken from Table 3 in~\cite{bravyi2023higharxiv}, which appears in the arXiv version but not the journal version of the paper.} presented in Ref.~\cite{bravyi2023higharxiv}, including the gross code (see \fig{example-codes}(b) in \sec{decoding-scenarios}). 

We assume independent $X$ and $Z$ errors on each qubit, with perfect stabilizer measurements (see \sec{noise}), and decode $X$ and $Z$ errors separately. 
The color codes and the bivariate bicycle codes we study all have 3-colorable $X$-type ($Z$-type) check vertices\footnote{The 3-colorable property is well known for color codes~\cite{bombin2006topological} and we learned it held for bivariate bicycle codes through private correspondence with Ted Yoder and Sergey Bravyi~\cite{yoder2024private}.} (see \fig{example-decoding-graphs}), meaning that each check vertex can be assigned a color red r, green g, or blue b, such that no two check vertices of the same color are connected to the same fault vertex in the decoding graph $G_X$ ($G_Z$).
This allows us to apply the following lower bound:  
\[
h(\sigma) \geq \max(h_2(\sigma), |\sigma^{\text{r}}|, |\sigma^{\text{g}}|, |\sigma^{\text{b}}|),
\]
for $h_2(\sigma)$ in \eq{tighter-bound}, and where $\sigma = \sigma^{\text{r}} \sqcup \sigma^{\text{g}} \sqcup \sigma^{\text{b}}$ is the syndrome decomposed into color components.

\begin{figure}[ht]
  \centering
  (a)\includegraphics[width=0.4\textwidth]{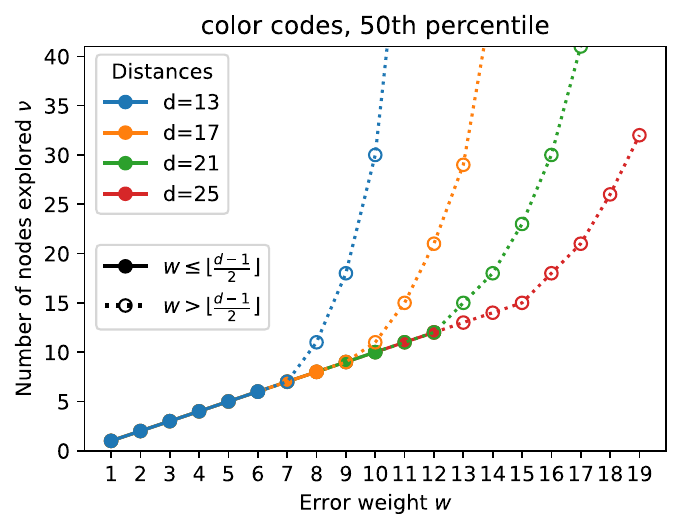}
  (b)\includegraphics[width=0.4\textwidth]{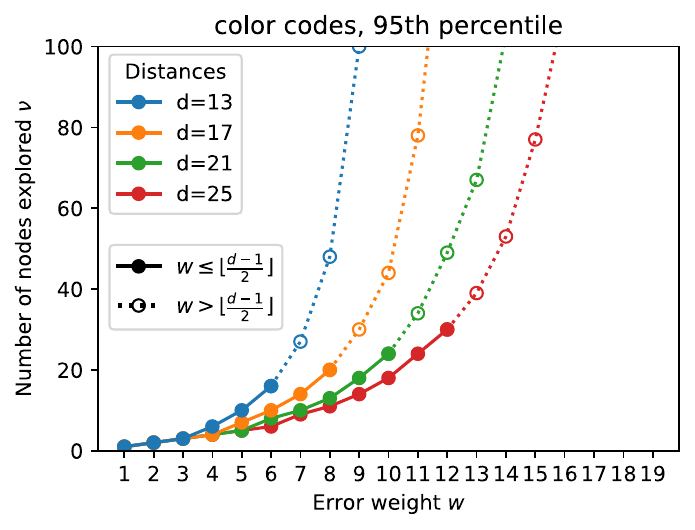}
  (c)\includegraphics[width=0.4\textwidth]{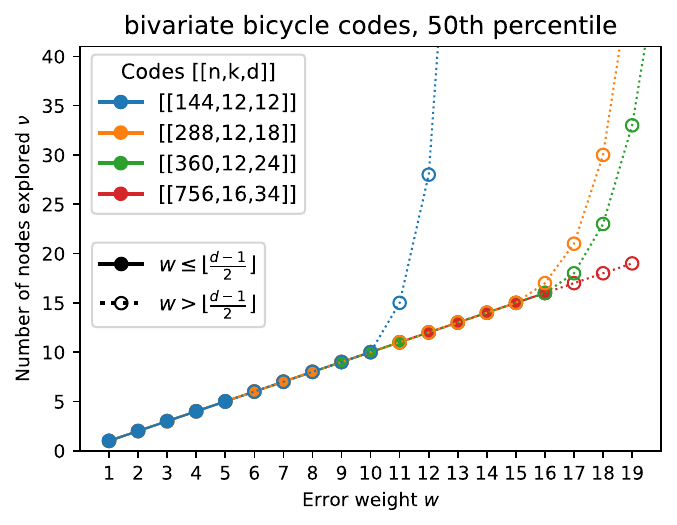}
  (d)\includegraphics[width=0.4\textwidth]{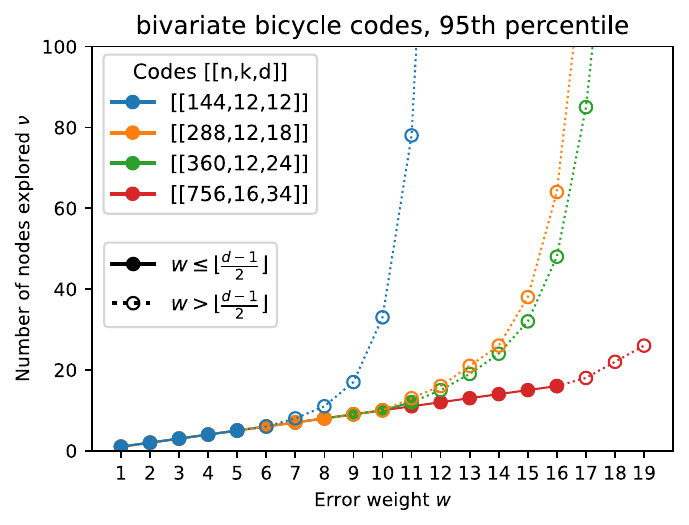}
     \caption{
    Number of explored decision tree nodes $\nu$ with height-bound DTD.  
    (a), (b): Median and 95th percentile of $\nu$ for color codes.  
    (c), (d): Median and 95th percentile of $\nu$ for bivariate bicycle codes.  
    For all codes studied, the median $\nu$ is minimal ($\nu = w$) for errors of weight $w < d/2$, but increases for $w > d/2$.  
    For three of the four bivariate bicycle codes studied, the 95th percentile $\nu$ is also minimal ($\nu = w$) for $w < d/2$.  
    (All error bars are smaller than the markers).
     }
   \label{fig:minweight-results}
\end{figure}

To evaluate the runtime of height-bound DTD for each code, we use it to correct uniformly sampled weight-$w$ $X$-errors and record the number of explored decision tree nodes, reporting the median and 95th percentile in \fig{minweight-results}. 
The results are very promising: for all tested codes, the median was optimal (equal to $w$) for all $w < d/2$, where $d$ is the code distance (or its upper bound if unknown). 
Restricting to the bivariate bicycle codes, the results are even better: for most cases with $w < d/2$, the 95th percentile was also optimal (equal to $w$). 
By contrast, in \fig{minweight-main-results} in \sec{intro} we see that using brute-force breadth-first exploration the median number of explored nodes grows exponentially with $w$.

In \fig{fitting_alpha}, we examine the tails of the runtime distribution for the color code family. 
For odd distances $d$, we uniformly sample weight-$w = \lfloor d/2 \rfloor$ errors (the largest correctable weight) and plot the number of explored nodes for various percentiles in \fig{fitting_alpha}(a). 
We observe that the data for each percentile fits a polynomial $\nu =  w^\alpha$ reasonably well. 
In \fig{fitting_alpha}(b), we plot the fitted $\alpha$ values, finding $\alpha = -0.132\log_2 \epsilon + 0.843$ for the percentile $100 \cdot (1 - \epsilon)$.   
For fixed (finite) $\epsilon$, this indicates polynomial scaling of $\nu$ with $w$, but as $\epsilon \to 0$, $\alpha$ diverges, implying super-polynomial worst-case runtime with $w$.

\begin{figure}[H]
  \centering
    (a)\includegraphics[width=0.45\textwidth]{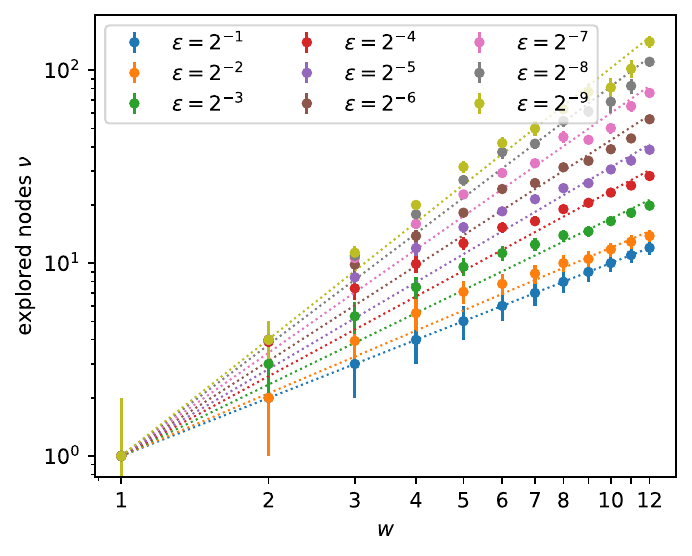}
    (b)\includegraphics[width=0.45\textwidth]{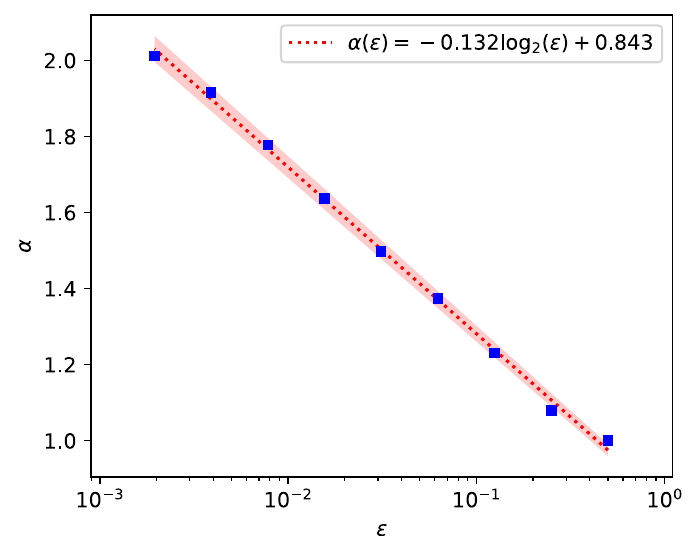}
     \caption{
    Scaling behavior of decoder runtime for the color code family. 
    (a) Number of explored nodes ($\nu$) for weight-$w=\lfloor d/2 \rfloor$ $X$-errors, plotted across percentiles $100\cdot(1-\epsilon)$. 
    For each $\epsilon$, we fit the function $\nu = w^{\alpha}$, extracting fit parameter $\alpha(\epsilon)$.
    (b) A plot of $\alpha(\epsilon)$ versus $\epsilon$ (log scale).
   }
  \label{fig:fitting_alpha}
\end{figure}

\textbf{Numerical details: }
In what follows, we provide further details needed to reproduce the data in this section.
For the BP subroutine of \sub{minweight-enhanced-cost} we use $t_\text{end}=12$ rounds. 
In the case of 2D color codes, there is a symmetry between $X$ and $Z$ such that $X$ errors and $Z$ errors behave identically, so we restrict to $X$ errors.
For the BB codes we present only the data for $X$ errors as we observed almost identical performance for $Z$ errors.
To estimate uncertainties for the median and 95th percentile values in \fig{minweight-results}, we resampled the data but observed values of $\nu$ were all within $\pm 1$, which is smaller than the marker size. 
In \fig{fitting_alpha}(a), we use whichever is larger between $\pm 1$ and the resampling estimate as the uncertainty for each data point. 
In \fig{fitting_alpha}(b), we plot the $\chi^2$-fit estimate of $\alpha$ along with its corresponding uncertainty. 
The uncertainties for the data points here are below the marker size. 

\textbf{Comparison with MaxSAT: }
In \app{maxsat-comparison}, we compare the runtime of our height-bound DTD to MaxSAT decoding data from Ref.~\cite{noormandipour2024maxsatdecodersarbitrarycss}. 
Encouragingly, our algorithm appears significantly faster in the low-error-rate regime, though the MaxSAT decoder appears to outperform it at error rates closer to the threshold.
While this comparison provides qualitative insights, it should be interpreted cautiously since the runtimes were measured on different machines, with implementations that were not optimized for absolute runtime.
A key conceptual difference between these algorithmic approaches is that incorporating code-specific structure (e.g., tighter height bounds) appears clearer for height-bound DTDs than for MaxSAT decoders.

\clearpage

\section{Belief-propagation decision-tree decoder}
\label{sec:fast-cost}
This section introduces a heuristic decision-tree decoder which may be useful for fast decoding. 
The decoder uses belief propagation to assign costs to nodes resulting in an approximate depth-first exploration. 
We provide and explain the exploration subroutine in \sec{bp-exploration} and compare the decoder’s speed and accuracy with a standard BP-OSD algorithm in \sec{circuit-noise-numerics}.

\subsection{Exploration using belief propagation}
\label{sec:bp-exploration}
Our heuristic algorithm calculates a cost update using belief propagation (BP). 
Specifically, for a decision tree node with fault set $F$, syndrome $\sigma$ and cost $C$, we apply decimated BP (see \sec{decoding-algorithms}), which outputs the LPR $\bar{\Lambda}_j =\bar{\Lambda}_\text{BP}(j \,| \, \sigma, F) $ for each fault $j$. 
The cost update $\Delta C$ for each child is then computed, resulting in a new child cost of $C + \Delta C$, where $\Delta C$ is given by:
\begin{equation}
\label{eq:heuristic-cost}
\Delta C(\bar{\Lambda}_j) = \frac{13}{\pi} \arctan\biggl( \frac{\bar{\Lambda}_j}{2}-1 \biggr) + \frac{11}{2}.
\end{equation}
This monotonically increasing function of $\bar{\Lambda}_j$ helps to stabilize the decoder  by keeping the cost within a finite range, even though $\bar{\Lambda}_j$ is unbounded (with $\bar{\Lambda}_j \rightarrow -\infty$ ($+\infty$) indicating BP is certain that $j$ is (is not) part of a minimum-weight correction).  
The constants in \eq{heuristic-cost} are somewhat arbitrary; the selected values give $\Delta C(-\infty) = -1$ and $\Delta C(+\infty) = 12$, reflecting changes in the decimated syndrome height $h_F(\sigma)$. 
Specifically, $h_F(\sigma)$ decreases by 1 when a fault is in a minimum-weight correction and increases by $s-1$ otherwise, where $s$ is the weight of the smallest stabilizer containing the fault.

From the heuristic cost update in \eq{heuristic-cost}, we define the BP exploration \sub{bp-cost}.
When \texttt{Explore} \sub{bp-cost} is used in \alg{decision-tree-decoder}, we call the resulting decoder \emph{BP-DTD}.

\begin{subroutine}[ht]
    \caption{BP-guided exploration}
    \label{alg:bp-cost}
    \begin{algorithmic}[1] 
        \State $\set{\bar{\Lambda}_j}, \, \hat{F}_\text{BP} \gets \texttt{BPDecimated}(s,F)$ \Comment{Run BP on $\sigma$ with decimation $F$}
        \If{$\sigma = H\hat{F}_\text{BP}$} \Comment{If BP converged, invoke early-exit condition}
            \State \textbf{return} $F \cup \Hat{F}_\text{BP}$
        \EndIf
        \State \textbf{pick} $i \in \sigma$
        \For{$j  \in  \calN(i) \! \setminus \! F = \{j_1, j_2, \dots j_b \}$}
        \State $\Delta C = \frac{13}{\pi} \arctan( \frac{1}{2}(\bar{\Lambda}_j-2)) + \frac{11}{2}$
            \State $C^{(j)} \gets C + \Delta C $
        \EndFor
    \end{algorithmic}
\end{subroutine}

Note that in \sub{bp-cost} if BP converges to a valid partial correction $\sigma = H\Hat{F}_\text{BP}$, the algorithm exits early and the decision tree decoder returns the full correction $F \cup \Hat{F}_\text{BP}$.

\subsection{Numerical results for the gross code with circuit noise}
\label{sec:circuit-noise-numerics}

Here we numerically evaluate the runtime and accuracy of the belief-propagation decision tree decoder (BP-DTD) using BP exploration~\sub{bp-cost}. 
The decoder is tested on the gross code, a $[[144, 12, 12]]$ bivariate-bicycle code from \cite{bravyi2024high} (see \fig{example-codes}(b) in \sec{decoding-scenarios} for details) assuming circuit noise.

We compare the accuracy and runtime of the BP-DTD and BP-OSD algorithms by generating random samples at physical error rates at physical error rates between $p=0.001$ and $p=0.004$, recording each decoder’s runtime and whether it succeeds for each sample.  
Since no system-independent metric exists, we report absolute runtimes measured on the same CPU cluster, noting that implementation details and processor type may influence the results.  
\fig{logical_error_rates} shows that BP-DTD outperforms BP-OSD in the whole probed error rate regime. 
\fig{decoding-time-dist3} presents runtime distributions, with logarithmic binning and relative frequencies shown as the fraction of samples per bin.  
BP-OSD shows a bimodal distribution, with the later mode corresponding to cases requiring the OSD stage. 
In contrast, BP-DTD has a smoother distribution, lower average runtime at low error rates, but longer tails at higher error rates.

\begin{figure}[H]
  \centering
    \includegraphics[width=0.9\textwidth]{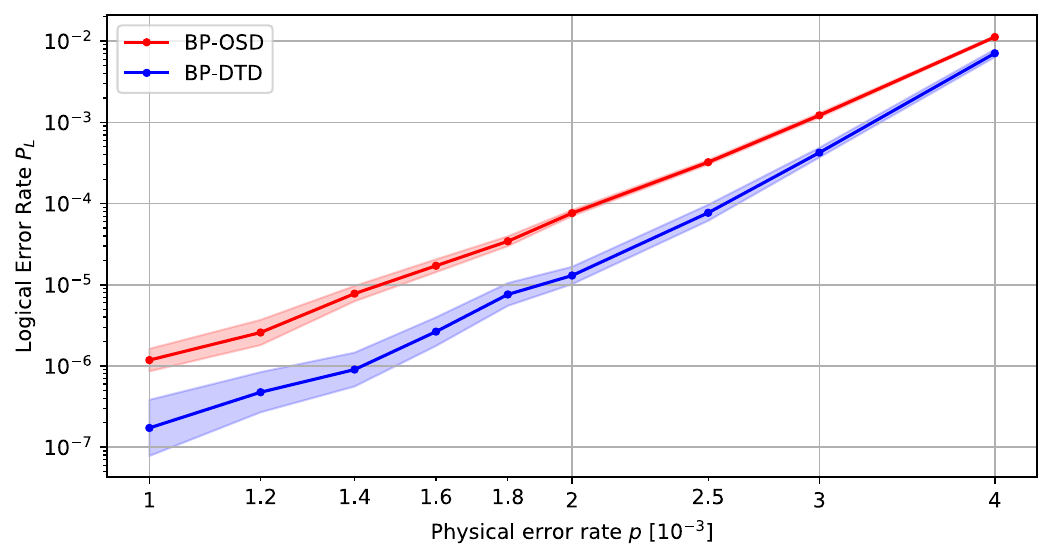}
     \caption{
     Comparing the logical error rate for the gross code for a range of circuit-noise error rates when decoding with BP-DTD versus BP-OSD (with combination-sweep 10). 
     Note here that we do not divide by the number of error correction cycles.
     }
  \label{fig:logical_error_rates}
\end{figure}

\begin{figure}[ht]
  \centering
    \includegraphics[width=0.8\textwidth]{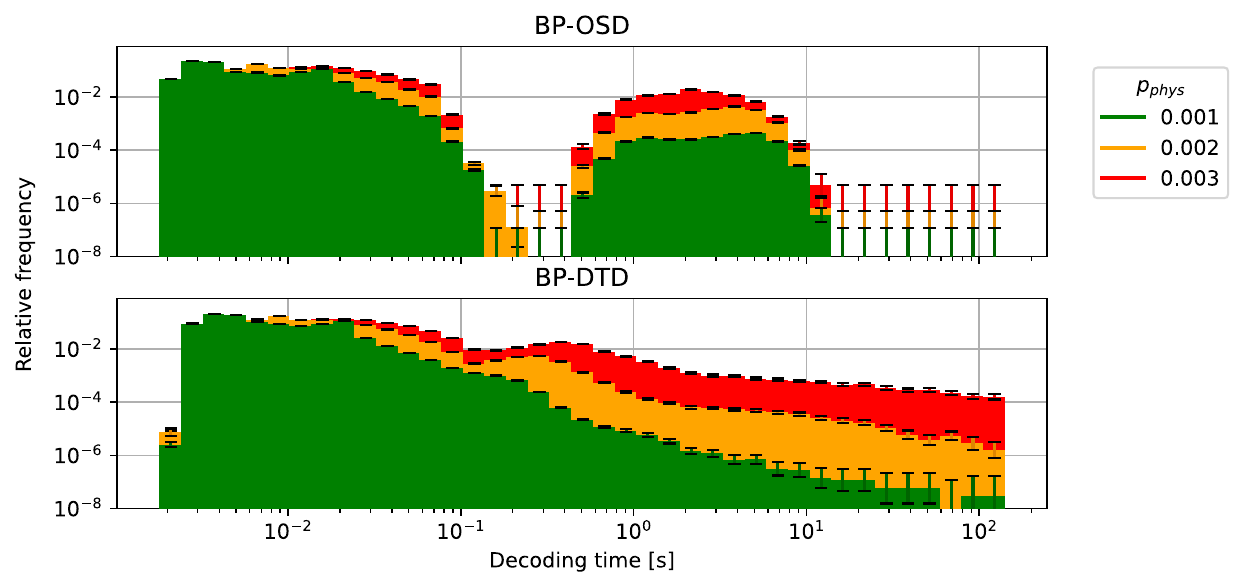}
     \caption{
    Runtime distributions of BP-DTD and BP-OSD for different circuit-noise error rates. 
    BP-OSD shows a bimodal distribution, where the later mode corresponds to cases requiring the OSD stage. 
    In contrast, BP-DTD has a smoother distribution with lower average runtime. 
    (Note that the data is separated into time-bins with width that grows logarithmically with decoding time.)
     }
  \label{fig:decoding-time-dist3}
\end{figure}

In many scenarios, both the decoding accuracy and runtime are crucial, requiring trade-offs~\cite{delfosse2023choose}. 
We propose using \emph{cutoff-time performance curves}, which plot the logical failure probability against the cutoff time $T$. 
Logical failure occurs either by exceeding $T$ or by decoding within $T$ but outputting an incorrect correction. 
In \fig{ler-vs-tcutoff-intro} (\sec{intro}), we compare the cutoff-time performance curves for BP-DTD and BP-OSD at $p=0.001$ error rate, highlighting BP-DTD’s significantly lower failure rates at intermediate cutoff times.

\textbf{Numerical details:}  
In what follows, we provide further details needed to reproduce the data in this section. 
The circuit parity-check matrix for the gross code is obtained from the publicly available code in Ref.~\cite{gong2024}, which uses the same syndrome extraction circuit as \cite{bravyi2024high} and the standard linearized circuit noise model (see \sec{noise}) with $R = 12$ syndrome measurement rounds plus one perfect round.  
Here we decode only the $X$-type errors. 
For BP-OSD, we use combination sweep setting 10 and $t_\text{end}=100$ BP rounds as a pre-decoder (see \app{bp-osd}), with Roffe’s standard Cython implementation \cite{roffe2020, Roffe_LDPC_Python_tools_2022}. 
We found that varying the combination sweep order had negligible impact on runtime distributions, so results are shown only for BP-OSD-10.  
For BP-DTD, parameters are $t_\text{end}=100$ at the root node, $t_\text{end}=12$ for other nodes, buffer length $l_\text{buff}=8$, and a node cap of 50000, after which decoding is declared a failure.  
The DTD code, written in Python, calls a modified Cython-based BP subroutine adapted from Roffe’s version to enable decimation and buffering.

\clearpage

\section{Outlook and future directions}
\label{sec:conclusion}

We conclude by first highlighting two potential applications of the height-bound DTD.
One promising application is as a pre-decoder: if it converges quickly, it guarantees a minimum-weight correction; otherwise, a fast fallback decoder can provide a correction without such guarantees.
Another application is to provably determine the distances of specific qLDPC codes by adapting a method described in Ref.~\cite{bravyi2024high} for finding distance upper bounds. 
By replacing the heuristic BP-OSD with the provable height-bound DTD, we can identify the exact distance rather than merely upper-bounding it (see \app{finding-code-distances}).

 Finally, we identify several promising directions for future research:

\begin{itemize}

    \item \textbf{Leveraging stabilizer equivalence: }
    In the height-bound decoder, the algorithm terminates when it finds a correction and ensures no lower-weight correction exists. 
    However, two corrections are equivalent if they differ by a stabilizer.
    A faster version of the height-bound decoder (with equally strong correction guarantees) could more aggressively search for a correction and then ensure no lower-weight \emph{non-equivalent} correction exists (i.e., corrections differing by a logical operator). 
    Upon finding a valid correction, the decoder could aggressively prune the remaining search space by applying bounds that tighten the bound for each node given the known correction.\footnote{Specifically, for a DT node $F$ and known correction $\hat{F}$, any non-equivalent correction $F'$ descending from $F$ must satisfy $|F'| \geq 2|\hat{F} \cap F| + d - |F|$. 
    The closer $F$ is to $\hat{F}$, the tighter this bound becomes.}

    \item \textbf{Alternative versions of height-bound DTD: }
    Exploring alternative tie-breaking methods beyond the decimated BP used in this work may further improve decoding speed. 
    Also, while the current height-bound DTD applies to any qLDPC decoding problem, including circuit-level noise, it does not exploit non-uniform fault probabilities. 
    Developing lower bounds on syndrome height for non-uniform fault settings could enhance its effectiveness in this context. 
    It could also be interesting to consider variants of the decoder which provide faster solutions but with weaker guarantees (perhaps outputting a correction along with a bound on how far it is from having minimum weight).
 
    \item \textbf{Other provable qLDPC decoders: } 
    Ensemble decoding suggests it would be very useful to identify other provable decoders (either provable performance or provable runtime~\cite{demarti2024almost} but not both)?   
    Another question is: What kind of syndrome height bounds would allow us to prove something non-trivial about the worst-case run time for an asymptotic code family? 

    \item \textbf{Other applications of DTD techniques: }
    It may be possible to adapt decision tree decoding techniques to solve other problems, for example to identify provable weight gaps between the minimum-weight correction and any logically non-equivalent correction (to bound the failure rate of specific fault-tolerant protocols).
    
    \item \textbf{Improving heuristic decoders: }  
    Further optimization of BP-DTD’s cost function could yield significant performance improvements, with optimal parameters likely depending on the specific QEC code and error rate regime. 
    One promising direction may be to train a neural network to estimate the syndrome height, which could then be used as a cost metric for the decision tree decoder. 
    Since the model only needs to output a single number rather than a complete correction, training may be simpler and more efficient compared to other neural-network-based decoders.

    \item \textbf{Cutoff-time performance curves: }
    Comparing existing and new heuristic decoders could help identify which methods perform best under different time constraints and error regimes.

\end{itemize}

In summary, improving cost functions, leveraging stabilizer equivalence, and exploring ensemble decoding strategies are promising directions for both provable and heuristic decision tree decoders. 
A primary challenge remains developing decoders that apply to any qLDPC code, return minimum-weight corrections and have provably efficient worst-case runtime (or proving that such a decoder cannot exist).

\textbf{Acknowledgments: }
This work benefited heavily from guidance and insightful discussions with thesis supervisors James R. Wootton and Joseph M. Renes, and also feedback from Nicolas Delfosse, Anqi Gong, Tomas Jochym-O'Connor, and Anirudh Krishna.
We thank Ted Yoder and Sergey Bravyi for pointing out that the set of bivariate bicycle codes we test our height-bound decoder on have 3-colorable decoding graphs.

\clearpage
\bibliographystyle{alpha}
\bibliography{references}

\clearpage
\appendix

\section{Appendices}

\subsection{Noise models and testing decoders numerically}
\label{sec:noise}

Here we define the experiments and noise models that we use for the studies in \sec{min-weight-numerics} and \sec{circuit-noise-numerics}, and specify how we test the decoders numerically.

We perform \emph{Quantum Memory} experiments using \emph{CSS} codes, that is, we are in the setting of logical memory circuits for CSS-type stabilizer codes (for which we refer back to \tab{decoding-matrices-summary} and \sec{stab-codes}, \sec{space-time-codes}). Here, decoding is performed independently for $X$-type noise via the $X$-type check matrix $H_X$ (or $H_Z$ for $Z$-type noise). $H_X$ and $H_Z$ (and equivalently the corresponding decoding graphs $G_X$ and $G_Z$) depend on the exact noise model used and the number of stabilizer extraction rounds chosen, which we discuss below.

\textbf{Data qubit noise}: Each data qubit experiences an $X$ error with probability $p$, and (independently) a $Z$ error with probability $p$, and then the stabilizer generators are measured perfectly. Because the measurements are perfect, performing a single round $R=1$ is sufficient. In this setting the check matrices $H_X$, $H_Z$ reduce to the check matrices $H_X^\calS$, $H_Z^\calS$ of the underlying CSS-type stabilizer code, and the decoding graphs $G_X$, $G_Z$ are precisely its tanner graphs $\calT_X$, $\calT_Z$. We use this noise model in \sec{min-weight-numerics} to study the height-bound DTD on the color code family (shown in \fig{example-decoding-graphs}(a)) and the bi-variate bicycle code family (a member of which is shown in \fig{example-decoding-graphs}(b)).

\textbf{Circuit depolarizing noise}:
An explicit detailed circuit built from one- and two-qubit operations (single qubit preparations and measurements in the Pauli basis, and Clifford unitaries) which measures the stabilizers for $R-1$ rounds, and then another round of perfect measurements is included. 
Each operation fails with probability $p$.
When a preparation fails, an orthogonal state is prepared.
When a measurement fails, the outcome is flipped.
When a Clifford unitary fails, a random non-trivial Pauli operator is applied to its support.\footnote{Note that as we have described this model, the different failure modes (for example each of the 15 non-trivial Paulis which can occur when a CNOT gate fails) are mutually exclusive, whereas we have assumed all faults occur independently.
In Ref.~\cite{chao2020} it was shown that there is an exact map from this exclusive noise model to an independent one where each of the aforementioned failure modes occur with probability $p' = p/15 +O(p^2)$. 
Here, as in other works in the literature~\cite{bravyi2024high}, we drop the non-linear corrections for simplicity.}
We use this noise model in \sec{circuit-noise-numerics} to study the decoding of the gross code which was shown in \fig{example-decoding-graphs}(b).
To do so, we take $R=12+1$, i.e. 12 rounds of noisy stabilizer measurements (using the stabilizer extraction circuit used in \cite{bravyi2024high}), followed by a single round of perfect measurements.

\textbf{Testing decoders: }
We can test a decoder by sampling among the possible fault configurations and for each fault configuration $F$, running the decoder on the syndrome $\sigma = H F$, and then checking whether the decoder output $\hat{F}$ succeeds or fails. 
(Since we are exclusively using CSS codes and always only decoding either $X$-type or $Z$-type errors, we drop the subscripts $X$ and $Z$.
We consider two kinds of samples: (1) sampling uniformly from the set of all fault configurations of a given (integer) weight $w$, and (2) sampling according to the probability distribution $\Prob(F)$ defined in \defn{decoder}.

If $N_\text{trials}$ are sampled, of which $N_\text{fail}$ fail and the rest succeed, we estimate the logical failure probability:
$$ P_L = \frac{N_\text{fail}}{N_\text{trials}}.$$

We report the uncertainty in the estimate of $P_L$ by the Wilson Score Interval (for a standard normal interval half-width of 2):
$$ P_L \in \frac{1}{1 + 4 / N_\text{trials}} \left( \frac{N_\text{fail}}{N_\text{trials}}+ \frac{2}{N_\text{trials}} \pm \frac{2}{N_\text{trials}} \sqrt{N_\text{fail}(1-\frac{N_\text{fail}}{N_\text{trials}}) + 1 }\right) $$

In addition to the logical failure rate, we also study other features of decoders, such as their run time. 
For these, we find it informative to provide more information about the sample distribution than that captured by the mean, and so we often report the median value or more generally the value at a particular percentile among those observed in the sample. 

\subsection{Belief-propagation with ordered statistics decoding}
\label{app:bp-osd}

The advantages of BP are high speed and parallelizability. 
However, because of its inherent locality, BP does not always converge to a solution consistent with the syndrome even for an infinite amount of message-passing rounds \cite{panteleev2021degenerate, gong2024}. 
In quantum error correction, BP is thus often used as a predecoder. 
In the case BP fails to produce a valid solution within a certain maximum number of iterations, its output will be used by a different (post-processing) decoding strategy such as ordered-statistics decoding (OSD) \cite{roffe2020}, matching or union-find (UF) \cite{higgott2023improveddecodingcircuitnoise}. 

In \emph{zero-order OSD} (BP-OSD-0), the degeneracy of the decoding problem is lifted by reducing the search-space such that it allows for a unique solution. For that the checkmatrix is reorderd such that the $\rank (H)$ linearly independent columns with the highest LPR are on the left side (in $S$) followed by the remainder of the columns on the right ($T$), which are also ordered according to LPRs:
\begin{equation*}
    H \Pi_\text{BP} = \begin{bmatrix}
        S & T
    \end{bmatrix}
\end{equation*}
where $\Pi_\text{BP}$ is the corresponding $N \times N$ permutation matrix, $S$ is a $M \times \rank(H)$ matrix and $T$ is a $M \times (N-\rank(H))$ matrix.
There is a unique solution $x$ to $S x = \sigma$ (given by $x = A^+ \sigma$, where $A^+$ is the pseudo-inverse given $\sigma$ always lies in the column-span of $S$), which defines the solution to the original problem by $\Hat{F} = \Pi_\text{BP} \begin{bmatrix}
    x \\ 0
\end{bmatrix} $:
\begin{equation*}
    H \Hat{F} = H \Pi_\text{BP} \begin{bmatrix}
    x \\ 0
\end{bmatrix} = \begin{bmatrix}
        S & T
    \end{bmatrix} \begin{bmatrix}
    x \\ 0
\end{bmatrix} = S x = \sigma
\end{equation*}

In \emph{higher-order OSD}, we additionally allow faults in the support of $T$. That is, we allow vectors $t$ and define $\Hat{F}_t = \Pi_\text{BP} \begin{bmatrix}
    x + S^+ Tt \\ t
\end{bmatrix} $, which is always a solution as 
\begin{equation*}
    H \Hat{F}_t = H \Pi_\text{BP} \begin{bmatrix}
    x + S^+ Tt \\ t
\end{bmatrix} = \begin{bmatrix}
        S & T
    \end{bmatrix} \begin{bmatrix}
    x + S^+ Tt \\ t
\end{bmatrix} = S x + Tt + Tt = \sigma
\end{equation*}
Again, the use of the pseudo-inverse $S^+$ is well justified by the fact that $Tt$ is always in the column span of $S$.

The chosen solution is then the one with the minimal weight $\Hat{F} = \mathop{\arg \max}_{t} w(\Hat{F}_t)$.
In principle, there are $2^{N-\rank(H)}$ choices for $t$, so this search quickly becomes untractable.
In practice, there are two main approaches:
\begin{itemize}
    \item The \emph{exhaustive-search order $\lambda$} (BP-OSD-ES-$\lambda$), which searches over all $t$ with support on the $\lambda$ left-most columns in $T$, giving $2^\lambda$ configurations.
    \item The \emph{combination-sweep order $\lambda$} (BP-OSD-CS-$\lambda$), which searches through all configurations of $t$ which have hamming weight $1$ and all hamming-weight $2$ configurations with support on the first $\lambda$ columns of $T$. In total this yields $N- \rank(H) +\binom{\lambda}{2}$ configurations.
\end{itemize}

For fixed number of configurations checked, combination-sweep order performs slightly better than exhaustive-search order \cite{roffe2020}. Throughout this work, we implicitly use combination-sweep order and write BP-OSD-$\lambda$ as short for BP-OSD-CS-$\lambda$.

BP-OSD is implemented in the python package \emph{ldpc} by \cite{Roffe_LDPC_Python_tools_2022}

\subsection{Relations between different syndrome-height bounds}
\label{app:bounds-relations}
Here we consider bounds on the syndrome height from \sec{provable-cost} and provide a few others.
While the other bounds are less tight (which we show below), they may be preferred for their simplicity for some applications.
We try to make this discussion self-contained, so first we review the setting.
Consider a bipartite graph $G$ with two types of vertices: check vertices and fault vertices, such that no two check vertices share an edge and no two fault vertices share an edge.
Let $c$ be the maximum number of check vertices touching any individual fault vertex.
Let the error $F$ be an (unknown) set of fault vertices.
The syndrome $\sigma$ is then the set of check vertices which neighbor an odd number of fault vertices in $F$.

Let $B_l$ be the set of fault vertices in $G$ (which may or may not be in $F$) which touch $l$ vertices in $\sigma$ for $l=1,2,\dots, c$.
Let $b_l$ be the size of the set $B_l$ for $l=1,2,\dots, c$.
Let $\text{sen}(v)$ of a check vertex $v \in \sigma$ be the largest integer $m$ such that $v$ is adjacent to an element of $B_m$ for $m=1,2,\dots, c$.
Let $a_l$ be the number of vertices in $\sigma$ with $\text{sen}=l$ for $l=1,2,\dots, c$.
Given $\sigma$, let $h(\sigma)$ be the minimum size of a fault set $F$ that results in the syndrome $\sigma$.
Consider the following lower bounds bounds for $h(\sigma)$:
\begin{eqnarray}
h_1(\sigma) &=&  \Bigl\lceil \frac{|\sigma|}{c} \Bigr\rceil. \nonumber \\
h_2(\sigma) &=& \ifrac{a_c}{c} + \ifrac{(a_c \bmod c) + a_{c-1}}{c-1} + \ifrac{((a_c \bmod c) + a_{c-1}) \bmod (c-1) + a_{c-1}}{c-2} + \dots. \nonumber\\ \\
h_3(\sigma) &=&  \Bigl\lceil \sum_{l=1}^c \frac{a_l}{l} \Bigr\rceil. \nonumber\\
h_4(\sigma) &=& \Bigl\lceil \frac{\card{\sigma} - |B_c|}{c-1} \Bigr\rceil. \nonumber
\end{eqnarray}

These bounds are related by: 
\begin{eqnarray}
h_2(\sigma) &\geq& h_3(\sigma), \nonumber \\
h_3(\sigma) &\geq& h_1(\sigma), \nonumber \\
h_3(\sigma) &\geq& h_4(\sigma), \nonumber 
\end{eqnarray}
such that $h_2(\sigma)$ is the tightest bound out of the four.

That $h_3(\sigma) \geq h_1(\sigma)$ is pretty straightforward, so we do not include that proof explicitly.
Next we prove that $h_3(\sigma) \geq h_4(\sigma)$. 
Consider a syndrome $\sigma$ with fault vertex sets $B_{c}, B_{c-1}, \dots, B_{1}$, and partition $|\sigma| = a_{c} + a_{c-1} + \dots + a_{1}$.
We define a new function $h_5(\sigma)$ as follows:
\begin{eqnarray}
h_3(\sigma) = \Bigl\lceil \sum_{l=1}^r \frac{a_l}{l} \Bigr\rceil \geq \Bigl\lceil  \frac{a_c}{c} + \frac{a_{c-1} + a_{c-2} + a_{1}}{c-1} \Bigr\rceil = \Bigl\lceil  \frac{a_c}{c} + \frac{|\sigma|-a_c}{c-1} \Bigr\rceil = h_5(\sigma), \nonumber\\
\end{eqnarray}
and therefore the function $h_5(\sigma)$ forms a weaker lower bound than $h_3(\sigma)$.
Now define a subset $\sigma' \subset \sigma$ obtained by removing one of the check vertices from $\sigma$ which touches an element of $B_{c}$.
This means $|\sigma'| = |\sigma|-1$ and $a'_c = a_c - \Delta$ for $1 \leq \Delta \leq c$, since that check vertex was touching at least one fault vertex that was touching $c$ elements of $\sigma$, and some of those check vertices could have been touching more than one element of $B_c$ (in which case they would still contribute to $a'_c$).
Also note that $|B'_c| \leq |B_c|-1$.
Therefore 
$$h_5(\sigma')= \Bigl\lceil  \frac{a_c-\Delta}{c} + \frac{|\sigma|-1-a_c+\Delta}{c-1} \Bigr\rceil = \Bigl\lceil  \frac{a_c}{c} + \frac{|\sigma|-a_c}{c-1} 
 + \frac{\Delta/c-1}{c-1} 
\Bigr\rceil \leq h_5(\sigma).
$$
We repeat this process iteratively (removing a syndrome vertex touching a fault vertex that touches $c$ syndrome vertices), until we end up with a syndrome $\sigma''$ such that $a''_c=0$ (which implies $B''_{c}$ is empty).
We have that:
\begin{eqnarray}
h_3(\sigma) \geq h_5(\sigma) \geq h_5(\sigma') \geq h_5(\sigma'')= \Bigl\lceil   \frac{|\sigma''|}{c-1} \Bigr\rceil \geq
\Bigl\lceil \frac{|\sigma|-|B_c|}{c-1} \Bigr\rceil = h_4(\sigma), \nonumber\\
\end{eqnarray}
where the last inequality is obtained by noting that it must be possible to obtain $\sigma''$ from $\sigma$ by a sequence of removing at most $|B_{c}|$ check vertices.

Lastly we show that $h_3(\sigma) \leq h_2(\sigma)$ since:
\begin{align*}
    \sum_{l=1}^c \frac{a_l}{l} &= \frac{a_c}{c} + \frac{a_{c-1}}{c-1} + \frac{a_{c-2}}{c-2} + \dots \\
    &= \ifrac{a_c}{c} + \frac{a_c \bmod c}{c} + \frac{a_{c-1}}{c-1} + \frac{a_{c-2}}{c-2} + \dots \\
    &\leq \ifrac{a_c}{c} + \frac{a_c \bmod c}{c-1} + \frac{a_{c-1}}{c-1} + \frac{a_{c-2}}{c-2} + \dots \\
    &= \ifrac{a_c}{c} + \frac{a_c \bmod c + a_{c-1}}{c-1} + \frac{a_{c-2}}{c-2} + \dots \\
    &= \ifrac{a_c}{c} + \ifrac{a_c \bmod c + a_{c-1}}{c-1} + \frac{((a_c \bmod c) + a_{c-1}) \bmod (c-1)}{c-1} + \frac{a_{c-2}}{c-2} + \dots \\
    &\leq \ifrac{a_c}{c} + \ifrac{a_c \bmod c + a_{c-1}}{c-1} + \frac{((a_c \bmod c) + a_{c-1}) \bmod (c-1)}{c-2} + \frac{a_{c-2}}{c-2} + \dots \\
    &= \ifrac{a_c}{c} + \ifrac{a_c \bmod c + a_{c-1}}{c-1} + \frac{((a_c \bmod c) + a_{c-1}) \bmod (c-1) + a_{c-2}}{c-2} + \dots \\
    &\vdots
\end{align*}
The left-hand side is still smaller or equal even when taking the ceiling of it (to get $h_3(\sigma)$, because the right-hand side $h_2(\sigma)$ is an integer). 
So rounding up the LHS is at most equal to $h_3$ and will never overtake $h_3$. 
In other words, the bound $h_2$ implies $h_3$. 
We have confirmed by generating some explicit random syndrome instances that there is a separation between the bounds $h_3$ and $h_2$. 

Lastly, we point out that $h_2(\sigma)$ is the tightest lower bound for $h(\sigma)$ that can be obtained from the $a_l$ vector alone.
That is, for any vector $a_l$, there exists a decoding graph $G$ and a syndrome $\sigma$ with the vector $a_l$ such that $h(\sigma) = h_2(\sigma)$.
Of course, there can be other tighter bounds for $h(\sigma)$ that use information beyond the $a_l$ vector.

\subsection{Finding the code distance using a min-weight decoder}  
\label{app:finding-code-distances}

Here we review how to use a decoder (such as height-bound DTD) which is guaranteed to output a min-weight correction to determine the distance of a specific qLDPC code.
This is essentially the approach described in Ref.~\cite{bravyi2024high}, which used BP-OSD to find upper bounds on code distance, but where we use a min-weight decoder in palce of BP-OSD.

For decoding matrices $H \in \mathbb{F}_2^{M \times N}$ and $A \in \mathbb{F}_2^{K \times N}$, construct an extended check matrix $H^{(i)} \in \mathbb{F}_2^{(M+1) \times N}$ for each $i \in {1, \dots, K}$ by appending the $i$th row of $A$ to $H$.
Set the syndrome $\sigma^{(i)} \in \mathbb{F}_2^{M+1}$ such that $\sigma^{(i)}_j = 1$ only for $j = M + 1$ and use the min-weight decoder to find a min-weight correction $\hat{F}^{(i)} \in \mathbb{F}_2^N$, which represents a min-weight logical operator for $H$ that is non-trivial for the $i$th row of $A$.

Note that the decoder may not be guaranteed to terminate quickly (we have no such guarantee for height-bound DTD for example).
Therefore the decoder may not always find $\hat{F}^{(i)}$ for all $i \in {1, \dots, K}$ within an acceptable period of time.
However, if it does, the code distance is given by $d = \min_{i} |\hat{F}^{(i)}|$, since any non-trivial logical operator must be non-trivial for at least one row of $A$.

A potential challenge for this approach when using height-bound DTD is that the added row in $H^{(i)}$ corresponds to a high-degree vertex (at least $d$) in the decoding graph.
This will result in a large number of children in the DT at the first step of the algorithm, which may be somewhat alleviated by a careful basis choice for $A$ so that the vertex degree is not much more than $d$.

\subsection{Finding all min-weight logical operators using height-bounded decision trees}
\label{app:find-min-weight-logicals}

Here we provide an approach to compute the matrix $L(d) \in \mathbb{F}_2^{R \times N}$, whose rows form the complete set of min-weight logical operators, given $H$, $A$, and $d$.

A naive approach would exhaustively test all ${N \choose d}$ weight-$d$ faults, retaining those with $HF = 0$ and $AF \neq 0$. 
However, this is computationally infeasible; for example, the gross code with perfect measurements would require testing ${144 \choose 12} \approx 10^{17}$ $X$-operators.
The algorithm we propose does not run in polynomial time, but significantly reduces the computation allowing analysis for moderate code sizes.
It leverages two key insights:
\begin{enumerate}
    \item \textbf{Min-weight logical operators form connected components in $G$: }  
    Each minimum-weight logical operator corresponds to a connected component in the graph $G$, where all fault nodes in its support are connected by paths that avoid fault nodes outside the operator's support. 
    This can be proved by contradiction. 
    Assume a logical operator $F$ with weight $d$ has multiple disconnected components, one of which is $F_c$. 
    Decomposing $F$ as $F = F_c \sqcup F_\perp$, we have $HF = 0$, which implies $HF_c = 0$ and $HF_\perp = 0$. 
    This means that $F_c$ is either a stabilizer (making $F_\perp$ a smaller logical operator) or itself a smaller logical operator than $F$, contradicting the assumption.

    \item \textbf{Logical operator enclosure constraint: }  
    A fault $F$ cannot be enclosed by a minimum-weight logical operator if $h_\text{min}(\sigma(F)) + |F| > d$.
    
\end{enumerate}

The first insight enables the use of the decision tree: by selecting any fault node $j$, all size-$d$ connected components enclosing $j$ are nodes at level-$(d-1)$ of the decision tree $\tau(\sigma)$ for $ \sigma = \mathcal{N}(j)$. 
This reduces the search space to $N \cdot (r - 1)^{d-1} \approx 10^9$ for the gross code (where $r$ is the row weight of $H$).
The second insight further reduces the search space by pruning nodes in the decision tree that cannot lead to weight-$d$ logical operators.
We also remove decision tree nodes that correspond to stabilizers.

To manage memory efficiently, the tree exploration is divided into two stages. 
Rather than constructing all layers of the decision tree sequentially, the algorithm generates layers up to a chosen generation $s$ and then identifies logical operators descending from each node at level $s$ individually. 
This is achieved using two algorithms: \alg{descending-logicals}, which takes an initial fault set $F_\text{in}$ with weight $w < d$ and outputs $\mathcal{L}(d)|_{F_\text{in}}$, the set of all weight-$d$ logical operators enclosing $F_\text{in}$; and \alg{find_A}, which constructs the decision tree up to level $s$ and (sequentially) applies \alg{descending-logicals} to each node at that level to determine the descending weight-$d$ logical operators.
(Note that we interchangeably say that a logical operator \emph{descends from}, and that it \emph{encloses the fault set of} a node in the decision tree.)

\begin{algorithm}
\caption{Find $\mathcal{L}(d)|_{F_\text{in}}$, the set of all weight-$d$ logical operators enclosing $F_\text{in}$}
\label{alg:descending-logicals}
\begin{algorithmic}[1]
\Require Error $F_\text{in}$ with weight $w$.
\Require Check matrix $H$, logical action matrix $A$, and minimum weight $d$.
\State Initialize $\mathcal{A}_w \gets \{F_\text{in}\}$ \Comment{Start with the initial fault set.}
\State Initialize empty sets $\mathcal{A}_{r+1} \gets \emptyset$ for $r = w, \dots, d-1$ \Comment{Prepare storage for all generations.}

\For{$r = w$ to $d-1$} \Comment{Iterate through weight layers.}
    \ForAll{$F \in \mathcal{A}_{r}$} \Comment{Explore each fault set in the current layer.}
        \State $\sigma \gets H F$ \Comment{Update the syndrome for $F$.}
        \State Find smallest $i$ such that $\sigma_i = 1$ \Comment{Identify the first unsatisfied check.}
        
        \ForAll{$j \in \mathcal{N}(i)$} \Comment{Iterate over neighboring faults.}
            \If{$F_j = 1$}
                \State \textbf{continue to next $j$} \Comment{Skip if $j$ is already in $F$.}
            \EndIf
            \State $F' \gets F + \{ j \}$ \Comment{Add fault $j$ to the set.}
            \State $\sigma' \gets \sigma + \mathcal{N}(j)$ \Comment{Update the syndrome.}

            \If{$r = d-1$ and $\sigma' = 0$ and $A F' \neq 0$}
                \State Add $F'$ to $\mathcal{A}_{r+1}$ \Comment{Found weight-$d$ logical operator.}
            \ElsIf{$r < d-1$ and $h_\text{min}(\sigma) + r + 1 \leq d$ and $\sigma \neq 0$}
                \State Add $F'$ to $\mathcal{A}_{r+1}$ \Comment{Prune.}
            \EndIf
        \EndFor
    \EndFor
\EndFor
\State \Return $\mathcal{A}_d$ \Comment{Final set $\mathcal{A}_d$ is $\mathcal{L}(d)|_{F_\text{in}}$, all weight-$d$ logical operators enclosing $F_\text{in}$.}
\end{algorithmic}
\end{algorithm}

\begin{algorithm}
\caption{Find $L(d)$ for given $H$, $A$, $d$, and $s$ (tree search separation point)}
\label{alg:find_A}
\begin{algorithmic}[1]
\Require Weight $s$.
\Require Check matrix $H$, logical action matrix $A$, and minimum weight $d$.
\State Initialize $\mathcal{A}_1 \gets \{\{ j \} \mid j = 1, \dots, N\}$ \Comment{Start with single-node fault sets.}
\State Initialize empty sets $\mathcal{A}_{r+1} \gets \emptyset$ for $r = 1, \dots, s-1$ \Comment{Prepare storage for layers.}

\For{$r = 1$ to $s-1$} \Comment{Iterate through layers up to $s$.}
    \ForAll{$F \in \mathcal{A}_{r}$} \Comment{Explore each fault set in the current layer.}
        \State Compute syndrome $\sigma \gets H F$ \Comment{Update the syndrome.}
        \State Find smallest $i$ such that $\sigma_i = 1$ \Comment{Identify the first unsatisfied check.}
        
        \ForAll{$j \in \mathcal{N}(i)$} \Comment{Iterate over neighboring faults.}
            \If{$F_j = 1$}
                \State \textbf{continue to next $j$} \Comment{Skip if $j$ is already in $F$.}
            \EndIf
            \State $F' \gets F + \{ j \}$ \Comment{Add fault $j$ to the set.}
            \State $\sigma' \gets \sigma + \mathcal{N}(j)$ \Comment{Update the syndrome.}
            \If{$\sigma \neq 0$}
                \State Add $F'$ to $\mathcal{A}_{r+1}$ \Comment{Store valid sets for the next layer.}
            \EndIf
        \EndFor
    \EndFor
\EndFor

\State Initialize $\mathcal{A} \gets \emptyset$ \Comment{Prepare to compute logical operators.}
\ForAll{$F \in \mathcal{A}_s$} \Comment{Explore each fault set at separation level $s$.}
    \State Use \alg{descending-logicals} to compute $\mathcal{L}(d)|_{F_\text{in}}$ \Comment{Find descending weight-$d$ logical operators.}
    \State Add elements of $\mathcal{L}(d)|_{F_\text{in}}$ to $\mathcal{A}$
\EndFor

\State Form matrix $L(d)$ by taking each element in $\mathcal{A}$ as a row \Comment{Construct result matrix.}
\State \Return $L(d)$ \Comment{Output matrix of min-weight logical operators.}
\end{algorithmic}
\end{algorithm}

Using this approach with height bound $h_2(\sigma)$ from \eq{tighter-bound}, we found 84 and 1884 distinct distance-$d$ $X$-type logical operators for the [[72,12,6]] and [[144,12,12]] bivariate bicycle codes, respectively.

\subsection{Runtime comparison of height-bound DTD and MaxSAT decoding}
\label{app:maxsat-comparison}

Here, in \fig{hbdtd-vs-maxsat} we compare the runtime required to decode color codes under data qubit noise by height-bound DTD (\sec{min-weight-numerics}) with the MaxSAT decoder from Ref.~\cite{noormandipour2024maxsatdecodersarbitrarycss}, discussed in \sec{maxsat-intro}. 
Both are general qLDPC decoders guaranteed to produce minimum-weight corrections.
The height-bound DTD data (from a single process on an Intel Core i9-10885H CPU, Windows 11, 32GB RAM) is compared with data extracted (by eye) from Fig. 1b of Ref.~\cite{noormandipour2024maxsatdecodersarbitrarycss} (using an award-winning MaxSAT solver Open-WBO~\cite{10.1007/978-3-319-09284-3_33}, run serially on an unspecified CPU).
This crude comparison provides qualitative insights but should be interpreted cautiously, as runtimes were measured on different machines and neither implementation was optimized for absolute runtime. 
Notably, the height-bound DTD appears significantly faster in the low-error-rate regime but similar or even slower than the MaxSAT decoder near the threshold.

\begin{figure}[h]
  \centering
    \includegraphics[width=0.6\textwidth]{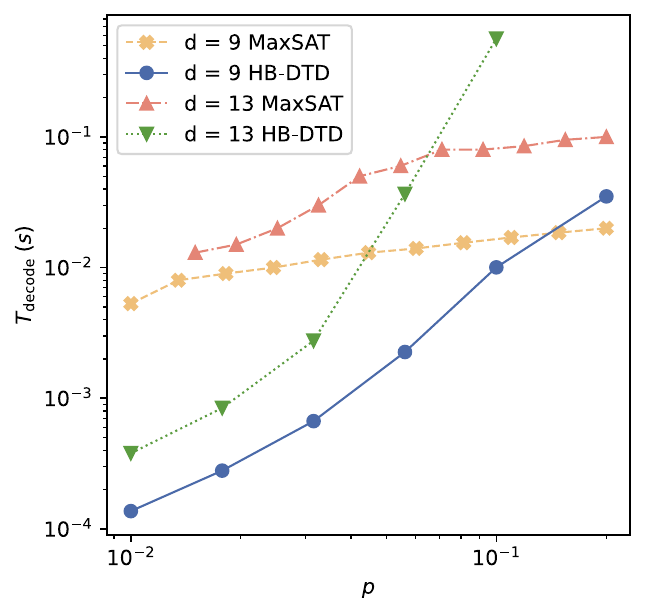}
     \caption{
     Comparison of the provable minimum-weight decoders, height-bound DTD and MaxSAT, for distance-9 and distance-13 color codes. 
     In the low-error-rate regime, height-bound DTD achieves runtimes more than an order of magnitude faster than MaxSAT. 
     At higher error rates, its performance becomes comparable to or worse than MaxSAT. 
     }
  \label{fig:hbdtd-vs-maxsat}
\end{figure}

\end{document}